		\pgfplotsset{compat=1.11}
	\newtheorem{theorem}{Theorem}
	\newtheorem{lemma}{Lemma}
	\newtheorem{corollary}{Corollary}
	\newtheorem{proposition}{Proposition}
	\newtheorem{definition}{Definition}
	\newtheorem{example}{Example}
	\newtheorem*{assumption*}{Assumption}
	\newtheorem{remark}{Remark}
	\newcommand{\leqnomode}{\tagsleft@true}
	\newcommand{\reqnomode}{\tagsleft@false}
\tikzset{
	barrow/.style={
		decoration={markings,mark=at position 1 with {\arrow[scale=1.5,#1]{>}}},
		postaction={decorate},
		shorten >=0.4pt},
	barrow/.default=black}
\newcommand{\bHa}{\bH_{\textnormal{A}}}
\newcommand{\bHp}{\bH_{\textnormal{P}}}
	\definecolor{mycolor1}{rgb}{0.00000,0.44700,0.74100}%
	\definecolor{mycolor2}{rgb}{0.85000,0.32500,0.09800}%
	\definecolor{mycolor3}{rgb}{0.92900,0.69400,0.12500}%
	\definecolor{mycolor4}{rgb}{0.49400,0.18400,0.55600}%
	\definecolor{mycolor5}{rgb}{0.300000,0.64314,0.00000}%
	\colorlet{mygreen}{green!60!black}%
	\colorlet{myblue}{mycolor1!80!black}%
	\definecolor{gray}{rgb}{0.6,0.6,0.6}
\newcommand\itemEq[1][]{%
  \ifx\relax#1\relax  \item \else \item[#1] \fi
  \abovedisplayskip=0pt\abovedisplayshortskip=0pt~\vspace*{-\baselineskip}}
\newcommand{\dd}{\mathop{}\!\mathrm{d}}	
\newcommand{\myoverset}[2]{\ensuremath{\overset{\mathclap{#1}}{#2}}}
\newcommand{\itb}{\begin{itemize}}
\newcommand{\ite}{\end{itemize}}
\newcommand{\enb}{\begin{enumerate}}
\newcommand{\ene}{\end{enumerate}}
\newcommand{\eqm}[1]{\begin{align}#1\end{align}}
\newcommand*{\dotleq}{\mathrel{\dot{\leq}}}
\newcommand*{\dotgeq}{\mathrel{\dot{\geq}}}
\newcommand{\Trans}{{\mathrm{T}}}
\newcommand{\LB}{\left(}
\newcommand{\RB}{\right)}
\newcommand{\LSB}{\left[}
\newcommand{\RSB}{\right]}
\newcommand{\abs}[1]{\left|#1\right|}
\newcommand{\bHH}{{{\hat{\bH}}}}
\newcommand{\expj}{^{(j)}}
\newcommand{\expo}{^{(1)}}
\newcommand{\expt}{^{(2)}}
\newcommand{\expK}{^{(K)}}
\newcommand{\Pb}{{\bar{P}}}
\newcommand{\Gck}{\overline{\Gc}_{\Kc}}
\newcommand{\gck}{\overline{g}_{\Kc}}
\newcommand\mhyp{{\hbox{-}}}
\DeclareMathAlphabet{\mathbit}{OML}{cmr}{bx}{it}
\DeclareMathAlphabet{\mathsf}{OT1}{cmss}{m}{n}
\DeclareMathAlphabet{\mathTXf}{OT1}{cmss}{bx}{it}
\DeclareMathOperator{\Transpose}{T}
\DeclareMathOperator{\Exp}{{\mathbb{E}}}
\DeclareMathOperator{\CN}{\mathcal{N}_{\mathbb{C}}}
\DeclareMathOperator{\DCSI}{DCSI}
\newcommand{\Weak}{\mathrm{Weak}} 
\DeclareMathOperator{\DoF}{DoF}
\newcommand{\BC}{{\text{BC}}}
\newcommand{\CCSI}{{\mathrm{CCSI}}}
\newcommand{\APZF}{\mathrm{APZF}}
\DeclareMathOperator{\SNR}{SNR}
\newcommand{\Rb}{{{\mathbb{R}}}}
\newcommand{\Cb}{{{\mathbb{C}}}}
\newcommand{\Wb}{{{\mathbb{W}}}}
\newcommand{\hv}{\mathbf{h}}
\newcommand{\xv}{\mathbf{x}}
\newcommand{\Cc}{{{\mathcal{C}}}}
\newcommand{\Gc}{{{\mathcal{G}}}}
\newcommand{\Hc}{{\mathcal{H}}}
\newcommand{\Ic}{{{\mathcal{I}}}}
\newcommand{\Kc}{{{\mathcal{K}}}}
\newcommand{\Sc}{{{\mathcal{S}}}}
\newcommand{\Uc}{{{\mathcal{U}}}}
\newcommand{\Vc}{{{\mathcal{V}}}}
\newcommand{\Xc}{{{\mathcal{X}}}}
\newcommand{\Yc}{{{\mathcal{Y}}}}
\newcommand{\bA}{\mathbf{A}}
\newcommand{\bF}{\mathbf{F}}
\newcommand{\bH}{\mathbf{H}}
\newcommand{\bI}{\mathbf{I}}
\newcommand{\bT}{\mathbf{T}}
\newcommand{\bd}{\bm{d}}
\newcommand{\be}{\bm{e}}
\newcommand{\bh}{\bm{h}}
\newcommand{\bs}{\bm{s}}
\newcommand{\by}{\bm{y}}
\newcommand{\bgamma}{{{\boldsymbol{\gamma}}}}
\newcommand{\gammabar}{{\bar{\gamma}}}
\newcommand{\norm}[1]{\lVert{#1}\rVert}
\newcommand{\Fro}{{\mathrm{F}}}
\newcommand{\E}{{\mathbb{E}}}
\newcommand{\trans}{{\text{T}}}
\newcommand{\He}{{{\mathrm{H}}}}
\newtcolorbox{whitebg}[1][]{
    arc=0mm,
    boxsep=0cm,
    toprule=0.0pt,
    leftrule=0.0pt,
    bottomrule=.0pt,
    rightrule=0.0pt,
    colframe=white,
		colback=white,
		top=3pt,
		bottom=3pt,
		hbox
}
\begin{document}
\title{On the Degrees-of-Freedom of the K-user Distributed Broadcast Channel}
\author{Antonio~Bazco-Nogueras,~\IEEEmembership{Member,~IEEE,}
        Paul de~Kerret,~\IEEEmembership{Member,~IEEE,}\\%
				David~Gesbert,~\IEEEmembership{Fellow,~IEEE,}
        and~Nicolas~Gresset,~\IEEEmembership{Senior~Member,~IEEE}% <-this % stops a space
		\thanks{A. Bazco-Nogueras was with the Mitsubishi Electric Research and Development Centre Europe, 35708 Rennes, France, and also with the Communications Systems Department, Eurecom, 06410 Biot, France (e-mail:bazco@eurecom.fr).}
		\thanks{P. de Kerret and D. Gesbert are with the Communications Systems Department, Eurecom, 06410 Biot, France (e-mail: dekerret@eurecom.fr; gesbert@eurecom.fr). P. de Kerret and D. Gesbert also acknowledge the support of the ERC 670896.}
		%\thanks{P. de Kerret and D. Gesbert also acknowledge the support of the ERC 670896.}
		\thanks{N. Gresset is with the Mitsubishi Electric Research and Development Centre Europe, 35708 Rennes, France (e-mail: n.gresset@fr.merce.mee.com).}
		\thanks{This work has been partially presented %at the 2016 Information Theory and Applications Workshop, 
		at the 2016 IEEE International Symposium on Information Theory\cite{dekerret2016_ISIT}, and at the 2017 Colloque GRETSI\cite{Bazco2017_GRETSI}.}
}

\maketitle

	%%%%%%%%%%%%%%%%%%%%%%%%%%%%%%%%%%%%%%%%%%%%%%%%%%%%%%%%%%
	%%%%%%%%%%%%%%%%%%%%%%%%%%%%%%%%%%%%%%%%%%%%%%%%%%%%%%%%%%%%%%%%%%%%%%%%%%%%%%%%%
	%%%%%%%%%%%%%%%%%%%%%%%%%%%%%%%%%%%%%%%%%%%%%%%%%%%%%%%%%%
	% As a general rule, do not put math, special symbols or citations
	% in the abstract or keywords.	
		\begin{abstract}
					We study the Degrees-of-Freedom (DoF) in a wireless setting in which $K$ Transmitters (TXs) aim at jointly serving~$K$ users.  The performance is studied when the TXs are faced with a distributed Channel State Information (CSI) configuration in which each TX has access to its own multi-user imperfect channel estimate based on which it designs its transmit coefficients. The channel estimates are not only imperfectly acquired but they are also imperfectly shared between the TXs. Our first contribution consists of computing a genie-aided upper bound for the DoF of that setting. Our main contribution is then to develop a new robust transmission scheme that leverages the different qualities of CSI available at the TXs to improve the achieved DoF. We show the surprising result that there is a CSI regime, coined the Weak-CSIT regime, in which the genie-aided upper bound is achieved by the proposed transmission scheme. Interestingly, the optimal DoF in the Weak-CSIT regime only depends on the CSI quality at the best informed TX and not on the CSI quality at all other TXs.
		\end{abstract} 

		% Note that keywords are not normally used for peerreview papers.
		%\begin{IEEEkeywords}
				%Degrees-of-Freedom, Network MIMO, decentralized.
		%\end{IEEEkeywords}

	%%%%%%%%%%%%%%%%%%%%%%%%%%%%%%%%%%%%%%%%%%%%%%%%%%%%%%%%%%%%%%%%%%%%%%%%%%%%%%%%%
	%%%%%%%%%%%%%%%%%%%%%%%%%%%%%%%%%%%%%%%%%%%%%%%%%%%%%%%%%%%%%%%%%%%%%%%%%%%%%%%%%
	%%%%%%%%%%%%%%%%%%%%%%%%%%%%%%%%%%%%%%%%%%%%%%%%%%%%%%%%%%%%%%%%%%%%%%%%%%%%%%%%%
	%%%%%%%%%%%%%%%%%%%%%%%%%%%%%%%%%%%%%%%%%%%%%%%%%%%%%%%%%%%%%%%%%%%%%%%%%%%%%%%%%
		\section{Introduction}
	%%%%%%%%%%%%%%%%%%%%%%%%%%%%%%%%%%%%%%%%%%%%%%%%%%%%%%%%%%%%%%%%%%%%%%%%%%%%%%%%%
	%%%%%%%%%%%%%%%%%%%%%%%%%%%%%%%%%%%%%%%%%%%%%%%%%%%%%%%%%%%%%%%%%%%%%%%%%%%%%%%%%
	%%%%%%%%%%%%%%%%%%%%%%%%%%%%%%%%%%%%%%%%%%%%%%%%%%%%%%%%%%%%%%%%%%%%%%%%%%%%%%%%%
	%%%%%%%%%%%%%%%%%%%%%%%%%%%%%%%%%%%%%%%%%%%%%%%%%%%%%%%%%%%%%%%%%%%%%%%%%%%%%%%%%

		%%%%%%%%%%%%%%%%%%%%%%%%%%%%%%%%%%%%%%%%%%%%%%
		%%%%%%%%%%%%%%%%%%%%%%%%%%%%%%%%%%%%%%%%%%%%%%
			\subsection{Limited Channel State Information on the Transmitter Side in Wireless Networks}
		%%%%%%%%%%%%%%%%%%%%%%%%%%%%%%%%%%%%%%%%%%%%%%
		%%%%%%%%%%%%%%%%%%%%%%%%%%%%%%%%%%%%%%%%%%%%%%
			\IEEEPARstart{C}{hannel} capacity characterization of multi-user wireless networks is known to be an elusive problem for many practical scenarios, in particular for the cases in which the Channel State Information at the Transmitter (CSIT) is not perfect. 
			In order to tackle this problem, capacity approximations at high Signal-to-Noise Ratio (SNR), such as Degrees-of-Freedom (DoF) analysis \cite{Jafar2007}, have been used as a first step towards the complete characterization of the system capacity, and they have successfully led to many important insights. 
			For example, the DoF of the MISO Broadcast Channel (BC) with imperfect noisy CSIT was characterized by Davoodi and Jafar in \cite{Davoodi2016} by
			showing that a CSIT error variance scaling in $\SNR^{-\alpha}$, for $\alpha\in [0,1]$, leads to a DoF of~$1+(K-1)\alpha$.

			A different line of work in the area of BC with limited feedback has been focused on the exploitation of delayed CSIT. This research area was triggered by the seminal work from Maddah-Ali and Tse~\cite{MaddahAli2012} where it was shown that completely outdated CSIT could still be exploited via a multi-phase protocol involving the retransmission of the interference generated. While the original model assumed completely outdated CSIT, a large number of works have developed generalized schemes for the case of partially outdated \cite{Gou2012, Yang2013, dekerret2016_ITW}, alternating \cite{Tandon2012b}, or evolving CSIT \cite{Chen2013a}, to name just a~few.

			In all the above literature, however, \emph{centralized} CSIT is typically assumed, i.e., the transmission is optimized at the TX side on the basis of a \emph{single} imperfect/outdated channel estimate being common at every transmit antenna. 
			Recently, the increasing importance of cooperation of non-collocated TXs---as, for example, in Unmanned Aerial Device (UAV) aided networks \cite{Becvar2017}---has led to an increasing number of works challenging this assumption of centralized CSIT.  
			In \cite{Rao2013,dekerret2014_TWC}, methods have been developed to reduce the CSIT required to achieve MIMO Interference Alignment (IA), and the DoF achieved with delayed and local CSIT in the Interference Channel (IC) has been also studied in several works \cite{Hao2015,Lee2015,Vahid2016a}. 
			Furthermore, the assumption of centralized CSIT has been challenged in capacity analysis for the Multiple Access Channel \cite{Lapidoth2010} and the Relay Channel \cite{Kolte2016}, among others.

		%%%%%%%%%%%%%%%%%%%%%%%%%%%%%%%%%%%%%%%%%%%%%%
		%%%%%%%%%%%%%%%%%%%%%%%%%%%%%%%%%%%%%%%%%%%%%%
		\subsection{Network MIMO with Distributed CSIT Model}
		%%%%%%%%%%%%%%%%%%%%%%%%%%%%%%%%%%%%%%%%%%%%%%
		%%%%%%%%%%%%%%%%%%%%%%%%%%%%%%%%%%%%%%%%%%%%%%
			In order to account for TX-dependent limited feedback in the network MIMO channel, we focus in this work on a wireless configuration first studied in \cite{dekerret2012_TIT}, in which the user's data symbols are available and jointly transmitted from all TXs, whereas the channel estimates could only be imperfectly obtained at the TXs. Hereinafter, we refer to this setting as the Distributed BC setting\footnote{%
					In terms of DoF analysis, and under the assumption of centralized shared CSIT, a network MIMO setting in which several TXs jointly serve a set of users can be seen equivalent to the conventional BC setting. Moreover, we will use the BC setting as an ideal benchmark to which we compare the performance of the scenario considered. 
					We hence refer to the counterpart setting with different CSIT at each TX as the Distributed BC setting.}. 
			We will show below why such assumptions, although seemingly contradictory at first sight, are actually very relevant in current wireless networks, and even more in future 5G-and-beyond networks. 
			Yet, let us first try to convey the main intuition before diving in a more detailed and precise discussion: In many scenarios of interest, the latency constraint for data delivery is significantly looser than the CSI outdating constraint (which is related to the coherence time, and hence very short in many relevant mobility scenarios). 
			This has for consequence that the data caching or sharing between TXs can be achieved in practice while \emph{timely} CSI acquisition and sharing becomes the main bottleneck.
			
			%%%%%%%%%%%%%%%%%%%%%%%%%%%%%%%%%%%%%%%%%%%%%%
			\subsection*{Imperfect CSI Acquisition and Sharing}
			%%%%%%%%%%%%%%%%%%%%%%%%%%%%%%%%%%%%%%%%%%%%%%
				We consider in this work a TX-dependent limited feedback where each TX receives its own multi-user imperfect estimate. 
				This CSIT configuration is coined as the \emph{Distributed CSIT} configuration (D-CSIT), in opposition to the aforementioned centralized CSIT model. % model. %~$\hat{\bH}^{(j)}$. 
				The imperfect multi-user channel estimate is obtained at the TXs from a CSI acquisition and sharing mechanism not discussed in this work. 
				Yet, due to unavoidable delay and imperfections in the CSI sharing mechanism, this CSI sharing step leads to a setting where the TXs have received different imperfect estimates of the true channel. 
				We provide below several practical examples that are illustrated in Fig.~\ref{fig:examples}.
				
					\begin{figure}[t]%
						\begin{overpic}[scale=.48]{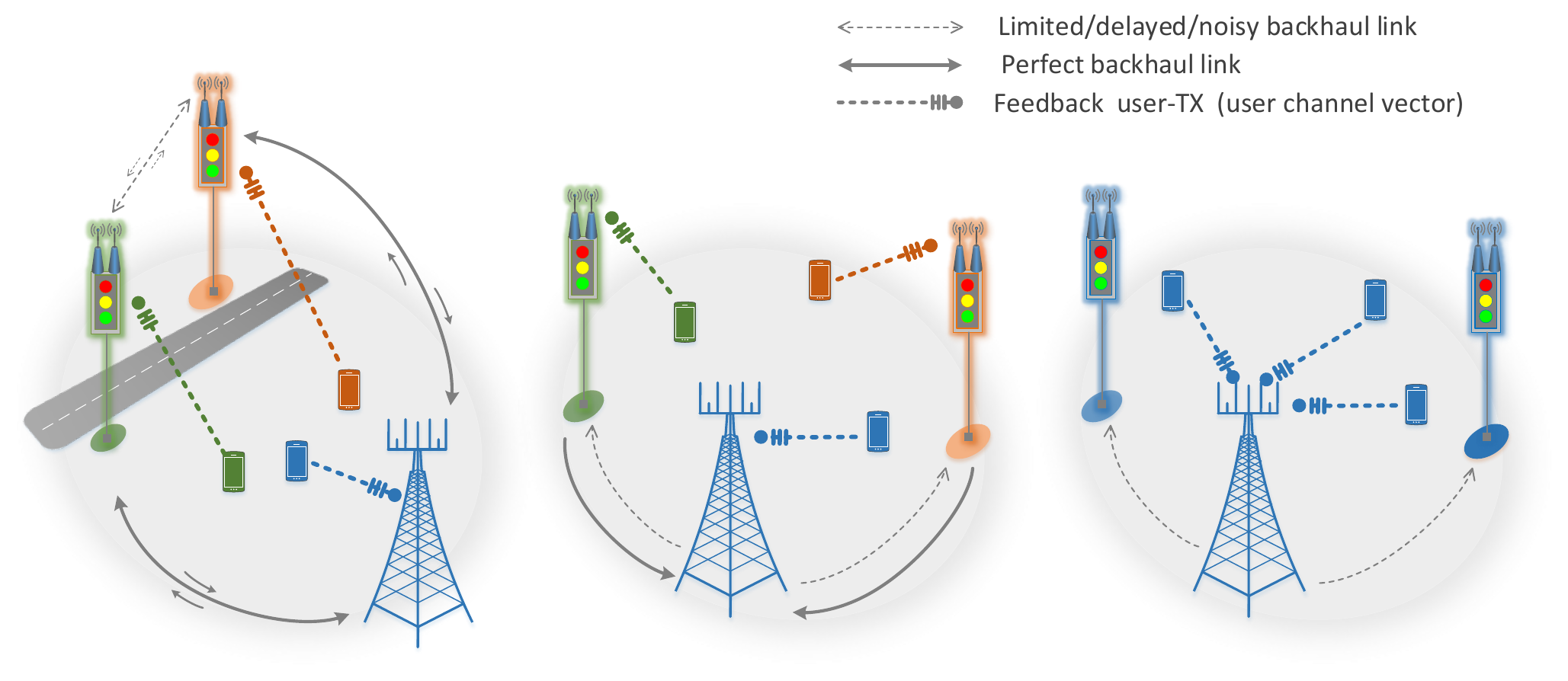}
							\put(18.25,27.5){\small $\hv_2$}
							\put(12.5,18){\small $\hv_3$}
							\put(-0.5,12.5){\small $\bH^{(3)}\!\!=\![\hv_1,\check{\hv}_2,\hv_3]$}
							\put(16,36.5){\small $\bH^{(2)}=[\hv_1,\hv_2,\check{\hv}_3]$}
							\put(20,0.5){\small $\bH^{(1)}=[\hv_1,\hv_2,\hv_3]$}
							\put(25.25,23.2){\small $\hv_1$}
							\put(26.7,26){\small $\hv_2$}
							\put(21,11.5){\small $\hv_1$}
							\put(13,3.5){\small $\hv_1$}
							\put(10.25,7.5){\small $\hv_3$}
							\put(7.8, 34){\small $\check{\hv}_2$}
							\put(8.75,30.7){\small $\check{\hv}_3$}
							\put(51,16.5){\small $\hv_1$}
							\put(54,27){\small $\hv_2$}
							\put(42,27){\small $\hv_3$}
							\put(33,32){\small $\bH^{(3)}=[\check{\hv}_1,\check{\hv}_2,\hv_3]$}							
							\put(52,31){\small $\bH^{(2)}=[\check{\hv}_1,\hv_2,\check{\hv}_3]$}							
							\put(44,0.5){\small $\bH^{(1)}=[\hv_1,\hv_2,\hv_3]$}
							\put(52,8){\small$\check{\bH}^{(1)}$}							
							\put(40.5,10.75){\small$\check{\bH}^{(1)}$}							
							\put(37.5,7.5){\small$\hv_3$}							
							\put(59,6.5){\small$\hv_2$}							
							\put(76,0.5){\small $\bH^{(1)}=[\hv_1,\hv_2,\hv_3]$}							
							\put(90,31){\small $\bH^{(2)}=\check{\bH}^{(1)}$}							
							\put(69.5,32){\small $\bH^{(3)}=\check{\bH}^{(1)}$}							
							\put(69.7,9){\small$\check{\bH}^{(1)}$}
							\put(90,7.5){\small$\check{\bH}^{(1)}$}
							\put(86,18){\small $\hv_1$}
							\put(83,22){\small $\hv_2$}
							\put(77,23.5){\small $\hv_3$}		
							\put(17,-2.5){\small $a)$}
							\put(49,-2.5){\small $b)$}
							\put(82,-2.5){\small $c)$}

						\end{overpic}~\vspace{2ex}
					\caption{Schematic illustration of three different example scenarios with distributed CSIT. FDD transmission is assumed. $\hv_i$ denotes the (highly accurate) estimate of user~$i$'s channel  that is fed back to the attached TX. ${\bH}^{(j)}$ denotes the CSI matrix obtained at TX~$j$ after cooperation. $\check{\bH}^{(j)}$ and $\check{\hv}_i$ represent a quantized/coarse version of the respective estimates, transmitted through limited backhaul communications.}\label{fig:examples}%
				\end{figure}

				\begin{example}\label{ex:ex1}
					In a network with several Base Stations (TXs) cooperating to jointly serve their users, each TX receives feedback from its attached user such that each TX knows accurately only a part of the channel state information. 
					For example, in Time Division Duplexing (TDD) transmission with reciprocity, each TX acquires a good estimate of its local channel; instead, in Frequency Division Duplexing (FDD) with user feedback, each TX acquires a good estimate of the whole channel vector towards its attached user. 
					
					Considering a CSI exchange step with heterogeneous links in the sense that one direction is of better quality, requires less quantization, or introduces less delay (which is a very reasonable assumption when considering heterogeneous networks where some of the TXs are UAVs~\cite{Becvar2017,Li2019} or vehicles~\cite{Veniam2019}), we obtain a setting where one of the TXs is uniformly more informed than the other. A particular example is depicted in~Fig.~\ref{fig:examples}.a, in which the link between TX~2 and TX~3 is a limited Device-to-Device (D2D) link. \qed
				\end{example}
				
				\begin{example}
					Let us consider the previous setting in the case where the CSI exchange is limited by a very restrictive delay. 
					Then, the sharing can be done by a transmission of the accurate CSI to a specific (main) TX, which then forwards a coarser version of the whole channel matrix to all TXs due to delay constraints. This retransmission from the main TX could also be broadcast, such that the resources spent on the sharing are reduced. Using layered encoding \cite{Ng2009}, every TX would obtain an estimate with a different accuracy. This setting is shown in~Fig.~\ref{fig:examples}.b.  \qed
				\end{example}					
				
				\begin{example}
					In a wireless network with one principal TX receiving feedback from all served users and several remote radio heads helping in the joint transmission, a distributed CSIT configuration is obtained when the CSI sharing from the main TX to the remote radio heads is done using limited and imperfect communication links, as illustrated in~Fig.~\ref{fig:examples}.c. 
					It could also be envisioned that the remote radio heads directly acquire low precision channel state information from direct feedback from the users using layered encoding \cite{Ng2009} or analog feedback \cite{ElAyach2012a}. Due to the weaker capabilities of the remote radio heads, a Distributed CSIT (D-CSIT) configuration with homogeneous quality at each TX would then be obtained.\qed
				\end{example}

				The analysis and the design of this CSI acquisition mechanism is an important research topic per-se and is consequently out of the scope of this paper and not further discussed. We assume in the following that this CSI acquisition step has already occurred through limited and imperfect communication links and has led to each TX having access to its own imperfect estimate of the multi-user channel state.

				%%%%%%%%%%%%%%%%%%%%%%%%%%%%%%%%%%%%%%%%%%%%%%
				\subsection*{User's Data Symbols Caching}
				%%%%%%%%%%%%%%%%%%%%%%%%%%%%%%%%%%%%%%%%%%%%
					Our second assumption comes from the fact that the user's data symbols are available at all TXs. This is made possible without putting in question the scenario described above because of two recent major techniques envisioned for future $5$G-and-beyond wireless networks: Caching \cite{Golrezaei2013,MaddahAli2014,Golrezaei2014} and Cloud-Ran/Fog-Ran \cite{Bonomi2012,Peng2015}. 
							
					Through caching, the user's data symbols are pre-fetched at the TX nodes before the transmission occurs \cite{MaddahAli2014}. Caching is an increasingly important feature that already exists in many scenarios \cite{Bastug2014,Zhang2017} and is envisioned in many more \cite{Chen2018}. With the user's data symbols available at the TXs, even at mobile and cost-efficient ones, the accurate and timely acquisition of the multi-user channel becomes the main bottleneck for efficient interference reduction. This leads to a D-CSIT configuration wherever the cooperation links are not of sufficient~quality.

					In the Cloud-RAN paradigm, the centralization of the processing of all nodes is envisioned so as to gain full benefits of cooperation. This centralization is however limited by its cost and its delay, such that partial centralization is considered a promising solution~\cite{iJOIND23}. 
					Considering decentralized precoding at the TXs allows to reduce the \emph{delay} in CSI acquisition. 
					In that case, the backhaul links are solely used to convey the user's data since, for many data-oriented applications, the application's latency requirements are orders of magnitude slower than the rate at which the fading channel evolves. The CSI estimates are directly exchanged between the TXs through direct links, thus reducing the delay of the complete CSI acquisition at the TXs. This CSI exchange between TXs through limited resources also leads to a D-CSIT configuration.
	
		%%%%%%%%%%%%%%%%%%%%%%%%%%%%%%%%%%%%%%%%%%%%%%
		%%%%%%%%%%%%%%%%%%%%%%%%%%%%%%%%%%%%%%%%%%%%%%
		\subsection*{Previous Works}
		%%%%%%%%%%%%%%%%%%%%%%%%%%%%%%%%%%%%%%%%%%%%%%
		%%%%%%%%%%%%%%%%%%%%%%%%%%%%%%%%%%%%%%%%%%%%%%				
	 
			In \cite{Li2020}, the finite-SNR performance of regularized Zero-Forcing (ZF) under  distributed CSIT  has been computed in the large system limit, and heuristic robust precoding schemes have been provided in \cite{dekerret2014_ISWCS, Fritzsche2013b} for practical cellular networks. 
			In terms of DoF, it was shown in a previous work \cite{dekerret2012_TIT} that using a conventional ZF precoder (regularized or not) leads to a severe DoF degradation caused by the lack of a consistent CSI shared by the cooperating TXs. 
			Exploiting the fact that there is a single interfered user in the two-user case, it was shown that exploiting the CSIT at a single TX was sufficient to completely suppress the interference, thus recovering from the DoF loss due to the D-CSIT setting. 
			This result  was extended in \cite{Bazco2018_WCL} to the Generalized DoF (GDoF) \cite{Etkin2008}, where it was shown that the 2-user case centralized GDoF is recovered for any path-loss topology \emph{and  CSIT allocation}. 
			In \cite{dekerret2018_ISWCS}, a Deep Learning based robust precoder was proposed. 
			This approach has been then extended in \cite{Kim2018} to settings with a cooperation link between TXs and further generalized in \cite{Lee2019} to arbitrary topologies with local CSI at the TXs. An overview of the use of Machine Learning for the physical layer of wireless communications with a particular focus on such decentralized applications can be found in \cite{Gunduz2019}.
			
			The extension of the high SNR results for the two-user setting to an arbitrary number of users is not straightforward and has remained an open problem for several years. Tackling this challenging problem is the focus of this work.

		%%%%%%%%%%%%%%%%%%%%%%%%%%%%%%%%%%%%%%%%%%%%%%
		%%%%%%%%%%%%%%%%%%%%%%%%%%%%%%%%%%%%%%%%%%%%%%
			\subsection{Main Contributions}
		%%%%%%%%%%%%%%%%%%%%%%%%%%%%%%%%%%%%%%%%%%%%%%
		%%%%%%%%%%%%%%%%%%%%%%%%%%%%%%%%%%%%%%%%%%%%%%
			Our main contributions read as follows: 
				\begin{itemize}
						\item We compute the DoF of a genie-aided setting where all TXs are given the knowledge of all the channel estimates at all TXs, the so-called \emph{centralized upper bound}. 
						\item We show the surprising result that this bound is actually tight for a range of D-CSIT configurations, coined the \emph{Weak-CSIT regime} and defined rigorously further down. 
						Interestingly, the optimal DoF for such D-CSIT regime  only depends on the CSIT quality at the most informed TXs. 
						Sharing the instantaneous CSIT among the TXs is hence not necessary for achieving the genie-aided centralized DoF and does not improve the optimal DoF.
						\item  Building on the fundamental principles of the previous scheme, we derive a robust transmission scheme adapted to any CSIT configuration and any number of users, and which significantly improves the achieved DoF with respect to state-of-the-art methods. 
				\end{itemize}
				A byproduct of our work which completes our main contributions is the development of new methods used as building blocks to our main algorithm, and which are of interest by themselves for other applications. The first one is the non-trivial extension of the Active-Passive precoding scheme introduced in \cite{dekerret2012_TIT}. Specifically, we extend the scheme to multiple passive TXs which turns out to be essential to transmit multi-stream transmissions to a single-antenna user and hence create an overloaded transmission. The second method is the translation to the distributed CSIT setting of the idea introduced by Maddah-Ali and Tse in \cite{MaddahAli2012} consisting in estimating and retransmitting the interference generated. In contrast to \cite{MaddahAli2012}, the interference terms are estimated \emph{before} they even take place and are retransmitted in the same time slot. This principle could be applied in other wireless configurations where some nodes are more informed than others. 
						
	%%%%%%%%%%%%%%%%%%%%%%%%%%%%%%%%%%%%%%%%%%%%%%%%%%%%%%%%%%%%%%%%%%%%%%%%%%%%%%%%%
	%%%%%%%%%%%%%%%%%%%%%%%%%%%%%%%%%%%%%%%%%%%%%%%%%%%%%%%%%%%%%%%%%%%%%%%%%%%%%%%%%
		\paragraph*{Notations} We denote the probability density function of a variable $\Xc$ as $f_{\Xc}(x)$. 
				For a conditional probability density function, the simplified notation $f_{\Xc|\Yc}(x |y)$ stands for  $f_{\Xc|\Yc}(\Xc=x|\Yc=y)$.  
				The circularly symmetric complex Gaussian distribution with mean $\mu$ and variance $\sigma^2$ is denoted by~$\CN(\mu,\sigma^2)$. 
				We use $\doteq$ to denote exponential equality, i.e., we write $f(P) \doteq P^x$ to denote $\lim_{P\rightarrow \infty} \frac{\log f(P)}{\log P}= x$. 
				The exponential inequalities $\dotleq$ and $\dotgeq$ are defined in the same way. 
				We also consider the conventional Landau notation with a slight variation, such that $f(x)=O(g(x))$ stands for $\lim_{x\rightarrow\infty}  \frac{\abs{f(x)}}{\abs{g(x)}}=a$, with $0<a<\infty$, and $f(x)=o(g(x))$ stands for $\lim_{x\rightarrow\infty} \frac{\abs{f(x)}}{\abs{g(x)}}=0$.
				We use the shorthand notation $\mathcal{K}\triangleq\{1,\ldots,K\}$. 
				Given a matrix $\bA$, the sub-matrix obtained by taking from $\bA$ the rows $\{r_i,\dots,r_j\}$ and the columns $\{c_i,\dots,c_j\}$ is denoted by $\bA_{r_i:r_j, c_i:c_j}$.
				
	%%%%%%%%%%%%%%%%%%%%%%%%%%%%%%%%%%%%%%%%%%%%%%%%%%%%%%%%%%%%%%%%%%%%%%%%%%%%%%%%%
	%%%%%%%%%%%%%%%%%%%%%%%%%%%%%%%%%%%%%%%%%%%%%%%%%%%%%%%%%%%%%%%%%%%%%%%%%%%%%%%%%
	%%%%%%%%%%%%%%%%%%%%%%%%%%%%%%%%%%%%%%%%%%%%%%%%%%%%%%%%%%%%%%%%%%%%%%%%%%%%%%%%%
	%%%%%%%%%%%%%%%%%%%%%%%%%%%%%%%%%%%%%%%%%%%%%%%%%%%%%%%%%%%%%%%%%%%%%%%%%%%%%%%%%
	\section{System Model}\label{se:SM}
	%%%%%%%%%%%%%%%%%%%%%%%%%%%%%%%%%%%%%%%%%%%%%%%%%%%%%%%%%%%%%%%%%%%%%%%%%%%%%%%%%
	%%%%%%%%%%%%%%%%%%%%%%%%%%%%%%%%%%%%%%%%%%%%%%%%%%%%%%%%%%%%%%%%%%%%%%%%%%%%%%%%%
	%%%%%%%%%%%%%%%%%%%%%%%%%%%%%%%%%%%%%%%%%%%%%%%%%%%%%%%%%%%%%%%%%%%%%%%%%%%%%%%%%
	%%%%%%%%%%%%%%%%%%%%%%%%%%%%%%%%%%%%%%%%%%%%%%%%%%%%%%%%%%%%%%%%%%%%%%%%%%%%%%%%%

		%%%%%%%%%%%%%%%%%%%%%%%%%%%%%%%%%%%%%%%%%%%%%%
		%%%%%%%%%%%%%%%%%%%%%%%%%%%%%%%%%%%%%%%%%%%%%%
		\subsection{Transmission Model}
		%%%%%%%%%%%%%%%%%%%%%%%%%%%%%%%%%%%%%%%%%%%%%%
		%%%%%%%%%%%%%%%%%%%%%%%%%%%%%%%%%%%%%%%%%%%%%%		
			We study a communication system where $K$~TXs jointly serve $K$~users (RXs) over a Network (Broadcast) MISO channel. 
			We assume a D-CSIT configuration, where each TX has access to its own CSIT, and we denote this setting henceforth as the $K$-user MISO BC with distributed CSIT.  
			We consider that each node (TX or RX) is equipped with a single-antenna. The assumption of single-antenna TX is done for ease of exposition, and the extension to multiple-antenna TX is straightforward.
			The signal received at RX~$i$ is written as
				\begin{align}
						y_i=\bm{h}_i^{\He}\xv+z_i,	\label{eq:SM_1}
				\end{align}
			where $\bm{h}_i^{\He}\in \mathbb{C}^{1\times K}$ is the channel vector towards RX~$i$, ${\xv\in\mathbb{C}^{K\times 1}}$ is the transmitted multi-user signal, which satisfies an average power constraint $\Exp[\norm{\xv}^2]\leq P$, and $P$ is the maximum transmit power. 
			$z_i\in \mathbb{C}$ is the additive noise at RX~$i$, independent of the channel and the transmitted signal, and distributed as~$\CN(0,1)$. 
			We further define the channel matrix~$\mathbf{H}\triangleq [\bm{h}_1,\ldots,\bm{h}_K]^{\He}\in \mathbb{C}^{K\times K}$ and the channel coefficient from TX~$k$ to RX~$i$ as $\bH_{i,k}$. 
			The channel is assumed to be drawn from a continuous distribution with density such that all the channel matrices and all their sub-matrices are full rank with probability one. 
			Furthermore, the channel coefficients are assumed to change after one channel use and to be independent from one channel use to another.   

			The transmitted multi-user signal~$\xv$ is obtained from the symbol vector $\bs \in  \mathbb{C}^{b\times 1}$ having its elements independently and identically distributed (i.i.d.) according to $\CN(0,1)$, where $b$ is the number of independent data symbols delivered. 
			We will differentiate in this work between the \emph{private} data symbols, destined to be decoded at a particular user, and the \emph{common} data symbols, broadcast and destined to be decoded at all users. Note that the term private is used only in contrast to common and does not refer to any privacy/secrecy constraint, but to the fact that only one user will decode the symbol.

		%%%%%%%%%%%%%%%%%%%%%%%%%%%%%%%%%%%%%%%%%%%%%%
		%%%%%%%%%%%%%%%%%%%%%%%%%%%%%%%%%%%%%%%%%%%%%%
			\subsection{Distributed CSIT Model}
		%%%%%%%%%%%%%%%%%%%%%%%%%%%%%%%%%%%%%%%%%%%%%%
		%%%%%%%%%%%%%%%%%%%%%%%%%%%%%%%%%%%%%%%%%%%%%%

			The Distributed CSIT (D-CSIT) setting differs from the conventional centralized one in that each TX receives a possibly different multi-user CSI, based on which it designs its own transmission parameters without any additional communication with the other TXs. 
			Specifically, TX~$j$ receives the imperfect multi-user channel estimate~$\hat{\bH}^{(j)}=[\hat{\bh}_1^{(j)},\ldots,\hat{\bh}^{(j)}_K]^{\He} \in \mathbb{C}^{K\times K}$ where $(\hat{\bh}^{(j)}_i)^{\He}$ refers to the estimate at TX~$j$ of the channel from all TXs towards RX~$i$. 
			TX~$j$ then designs its transmit coefficients solely as a function of~$\hat{\bH}^{(j)}$ and the statistics of the channel. This scenario can be seen as a multi-agent cooperative decision with common goal, where each node knows the structure of the system but not the information that the others own\cite{Radner1962}. 

				\begin{remark}
						It is critical to this work to understand well how the \emph{Distributed} CSIT setting differs from the many different \emph{heterogeneous} CSIT configurations studied in the literature. 
						Indeed, a heterogeneous CSIT configuration typically refers to a \emph{centralized} CSIT setting (i.e., with a channel estimate common to all TXs), where each element of the channel is known with a different quality owing to specific feedback mechanisms \cite{Lin2018_hybrid,Lashgari2016,Davoodi2018_ICC,Piovano2016}. In contrast, the distributed setting considered here has as many different channel estimates as TXs, and each TX does not have access to the CSIT knowledge at the other TXs.\qed 
				\end{remark}
			
			%%%%%%%%%%%%%%%%%%%%%%%%%%%%%%%%%%%%%%%%%%%%%%%%%%%%%
				\begin{figure}[t]\centering~\\[3ex]
						~\hspace{25ex}\subfile{Bazco_TIT2018_FIG_setting_DCSIT_one}%
						\caption{$K\times K$ Distributed Broadcast Channel.  The accuracy of the channel estimate at TX~$j$ is modeled through the CSIT scaling coefficient~$\alpha^{(j)}$.} \label{DCSI_setting}
				\end{figure}
			%%%%%%%%%%%%%%%%%%%%%%%%%%%%%%%%%%%%%%%%%%%%%%%%%%%%%
			
			We follow the conventional model in the literature~\cite{Gou2012,Yang2013,dekerret2016_ITW,Hao2017} to model the dependency of the CSIT accuracy as a function of the SNR,  and write the channel estimate at TX~$j$ as
				\begin{align}
						\hat{\bH}^{(j)}=\bH+{\Pb^{-\alpha^{(j)}}}\bm{\Delta}^{(j)},						\label{eq:SM_2a}
				\end{align}
			where $\Pb \triangleq \sqrt{P}$ and $P$ is the nominal SNR parameter. 
			${\bm{\Delta}^{(j)}\in\Cb^{K\times K}}$ is a random variable with zero mean and bounded covariance matrix.
			The scalar $\alpha^{(j)}$ is called the \emph{CSIT scaling coefficient} at TX~$j$ and represents the average accuracy of the estimate at TX~$j$. 
			Following insights obtained from the analysis of the centralized CSI configuration~\cite{Jindal2006,Davoodi2016}, we focus in this work on the configurations where~ $\alpha^{(j)}\in [0,1]$.
				\begin{remark} 
					It is shown in \cite{Davoodi2016} for the centralized case that if the  CSI accuracy does not improve polynomially with the SNR (i.e., a CSIT scaling coefficient equal to zero), the channel estimate leads to no DoF improvement. Similarly, a CSIT scaling coefficient equal to one is shown to be sufficient to attain the DoF obtained with perfect CSIT. These results are in fact very intuitive when considering the scaling of the interference. \qed
				\end{remark}

			For later use, we also denote the $i$-th row of $\bm{\Delta}^{(j)}$ as $(\bm{\delta}^{(j)}_i)^{\He}$, such that % it holds 
				\begin{align}
						\hat{\bh}^{(j)}_i=\bh_i+{\Pb^{-\alpha^{(j)}}}\bm{\delta}_i^{(j)}.				\label{eq:SM_2b}
				\end{align}
			We restrict in this work the D-CSIT model to an \emph{homogeneous CSIT quality} at a given TX, such that every channel coefficient is known at TX~$j$ with the same average CSIT quality~$\alpha^{(j)}$, as it can be seen in~\eqref{eq:SM_2a}. This limitation is not inherent to the {D-CSIT} model and it is solely done here for simplicity. How to deal with different CSIT qualities for the different channel coefficients is a topic of undergoing research and it is out of the scope of this work.

			Given that one TX has the same CSIT quality ($\alpha^{(j)}$) for all channel coefficients, we can order the TXs without loss of generality such that
				\begin{align}
						1 \geq \alpha^{(1)} \geq \alpha^{(2)}\geq \dots \geq \alpha^{(K)} \geq 0. \label{eq:distributed_alpha_order} 
				\end{align}
			The multi-user distributed CSIT configuration can be hence represented through the multi-TX CSIT scaling vector $\bm{\alpha}\in \mathbb{R}^{K}$ defined as
				\begin{align}
						\bm{\alpha}\triangleq \begin{bmatrix} \alpha^{(1)}, & \dots, & \alpha^{(K)}\end{bmatrix}^{\Transpose}.		\label{eq:SM_3}
				\end{align}    

			The parameters~$\bm{\alpha}$ represent the average accuracy of the estimates. They are long-term coefficients that vary slowly in time and can be easily shared to all TXs. Consequently, the parameters~$\bm{\alpha}$ are assumed in the following to be fixed and known at all TXs. 

			Our upper bound builds on the centralized upper bound of~\cite{Davoodi2016}. 
			On that account, and for the sake of completeness, we recall the Bounded Density assumption introduced in\cite{Davoodi2016}, which is necessary for the proof of the upper bound.  
			%%%%%%%%%%%%%%%%%%%%%%%%%%%%%%%%%%%%%%%%%%%%%%%%%%%%%%%%%%%%%%%%%%%%
				\begin{definition}[Bounded Density Coefficients\cite{Davoodi2016}]\label{def:bounded_density} 
						A set of random variables,~$\mathcal{A}$, is said to satisfy the bounded density assumption if there exists a finite positive constant $f_{\max}$,
						\begin{equation}
								0<f_{\max}<\infty,
						\end{equation}
						such that for all finite cardinality disjoint subsets~$\mathcal{A}_1$, $\mathcal{A}_2$ of $\mathcal{A}$, 
						%\begin{equation}
								$\mathcal{A}_1\subset \mathcal{A}$, $\mathcal{A}_2\subset \mathcal{A}$, $\mathcal{A}_1\cap \mathcal{A}_2=\emptyset$, $|\mathcal{A}_1|<\infty$, $|\mathcal{A}_2|<\infty$, 
						%\end{equation}
						the conditional probability density functions exist and are bounded as follows
						\begin{equation}
								\forall A_1,A_2,~~f_{\mathcal{A}_1|\mathcal{A}_2}(A_1|A_2)\leq f_{\max}^{|\mathcal{A}_1|}.
						\end{equation}
				\end{definition}
			We assume hereinafter that all channel realizations~$\bH_{i,k}$ and estimation noise variables~$\bm{\Delta}_{i,k}^{(j)}$ satisfy the bounded density property. Furthermore, the channel realizations and the estimation noise are mutually independent.

		%%%%%%%%%%%%%%%%%%%%%%%%%%%%%%%%%%%%%%%%%%%%%%%%%%%%%%%%%%%%%%%%%%%
		
			%\begin{remark}
				%The notion of bounded density has been successfully used in \cite{Davoodi2016} to model the imperfection of the CSIT. It is a mild technical assumption which is expected to hold to all continuous generic distributions encountered. In the important case of jointly Gaussian CSI noise where the noise realizations~$\bm{\Delta}^{(j)}\in \mathbb{C}^{K\times K}$ are i.i.d. according to $\CN\LB\bm{0}_K,\I_K\RB$ and all the CSI noise error terms $\bm{\Delta}^{(j)}$ are independent of each other, the conditional distributions can be easily calculated in closed-form using \cite[Theorem 23.7.4. p.~484]{Lapidoth09} and it can be verified that the bounded density condition is satisfied.\qed
			%\end{remark}
			
		%%%%%%%%%%%%%%%%%%%%%%%%%%%%%%%%%%%%%%%%%%%%%%
		%%%%%%%%%%%%%%%%%%%%%%%%%%%%%%%%%%%%%%%%%%%%%%
			\subsection{CSIR Model}\label{se:SM:CSIR}
		%%%%%%%%%%%%%%%%%%%%%%%%%%%%%%%%%%%%%%%%%%%%%%
		%%%%%%%%%%%%%%%%%%%%%%%%%%%%%%%%%%%%%%%%%%%%%%
			In this work, we focus on the impact of the imperfect CSI on the TX side as the CSI acquisition is widely acknowledged to be more challenging on the TX side than on the RX side in FDD, because the CSI that has been estimated at the RX needs to be fed back towards the TX. Therefore, we consider that every RX has perfect knowledge of its own channel. As in the important literature on delayed CSIT \cite{MaddahAli2012,Tandon2012b,Gou2012,Chen2013a,Yang2013,dekerret2016_ITW}, to name just a few, we assume that the RX has been able to obtain perfect knowledge of the multi-user channel, i.e., also the channel to the other users. This assumption is key to the approach used in this work. %, and that we use in this work in a different scenario.
						
			However, it is also important to note that this assumption can be weakened in our work as it is sufficient for the RXs to obtain the multi-user CSIT up to the best CSIT quality across the TXs (not necessarily the same estimate, but of the same quality). Furthermore, the estimate should be made available at the RX for the decoding, such that its latency constraint stems from the user's data, not from the time coherency of the channel.

			\FloatBarrier
		%%%%%%%%%%%%%%%%%%%%%%%%%%%%%%%%%%%%%%%%%%%%%%%%%%%%%
		%%%%%%%%%%%%%%%%%%%%%%%%%%%%%%%%%%%%%%%%%%%%%%%%%%%%%
			\subsection{Degrees-of-Freedom Analysis}
		%%%%%%%%%%%%%%%%%%%%%%%%%%%%%%%%%%%%%%%%%%%%%%%%%%%%%
		%%%%%%%%%%%%%%%%%%%%%%%%%%%%%%%%%%%%%%%%%%%%%%%%%%%%%
			We assume that every user~$i \in \mathcal{K}$ wishes to receive message $W_i\in\Wb_i$. After $n$~channel uses, the  rate $R_i(P)$ is achievable for RX~$i$ if $R_i(P)= \frac{\log|\Wb_i|}{n}$ and the probability of wrong decoding goes to zero as $n$ goes to infinity. The sum capacity~$\mathcal{C}(P)$ is  defined as the supremum of the sum of all achievable rates\cite{Cover2006}. The optimal sum DoF in the D-CSIT setting with CSIT scaling coefficients~$\bm{\alpha}\in\Rb^{K}$ is  defined by
				\begin{align}
						\DoF^{\DCSI}(\bm{\alpha})\triangleq \lim_{P\rightarrow \infty}\ \frac{\mathcal{C}(P)}{\log_2(P)}.				\label{eq:SM_7}
				\end{align} 

	%%%%%%%%%%%%%%%%%%%%%%%%%%%%%%%%%%%%%%%%%%%%%%%%%%%%%%%%%%%%%%%%%%%%%%%%%%%%%%%%%
	%%%%%%%%%%%%%%%%%%%%%%%%%%%%%%%%%%%%%%%%%%%%%%%%%%%%%%%%%%%%%%%%%%%%%%%%%%%%%%%%%
	%%%%%%%%%%%%%%%%%%%%%%%%%%%%%%%%%%%%%%%%%%%%%%%%%%%%%%%%%%%%%%%%%%%%%%%%%%%%%%%%%
	%%%%%%%%%%%%%%%%%%%%%%%%%%%%%%%%%%%%%%%%%%%%%%%%%%%%%%%%%%%%%%%%%%%%%%%%%%%%%%%%%
		\section{Main Results}\label{se:main}
	%%%%%%%%%%%%%%%%%%%%%%%%%%%%%%%%%%%%%%%%%%%%%%%%%%%%%%%%%%%%%%%%%%%%%%%%%%%%%%%%%
	%%%%%%%%%%%%%%%%%%%%%%%%%%%%%%%%%%%%%%%%%%%%%%%%%%%%%%%%%%%%%%%%%%%%%%%%%%%%%%%%%
	%%%%%%%%%%%%%%%%%%%%%%%%%%%%%%%%%%%%%%%%%%%%%%%%%%%%%%%%%%%%%%%%%%%%%%%%%%%%%%%%%
	%%%%%%%%%%%%%%%%%%%%%%%%%%%%%%%%%%%%%%%%%%%%%%%%%%%%%%%%%%%%%%%%%%%%%%%%%%%%%%%%%  
		As a preliminary, let us first state the optimal DoF of the centralized $K$-user BC setting where a single  estimate with  CSIT scaling coefficient~$\alpha$ is perfectly shared by all TXs. It was shown in \cite{Davoodi2016} that the sum DoF in that configuration, denoted by $ \DoF^{\CCSI}(\alpha)$, is equal to 
			\begin{equation}
					\DoF^{\CCSI}(\alpha)=1+(K-1)\alpha.			\label{eq:centralized_DoF}
			\end{equation}
		We can now present our main results.
		%%%%%%%%%%%%%%%%%%%%%%%%%%%%%%%%%%%%%%%%%%%%%%%%%%%%%%%%%%%%%%%%%%%%%%%%%%%%%%%%%
		%%%%%%%%%%%%%%%%%%%%%%%%%%%%%%%%%%%%%%%%%%%%%%%%%%%%%%%%%%%%%%%%%%%%%%%%%%%%%%%%%
		\subsection{Upper bound}\label{subse:main_bound}
		%%%%%%%%%%%%%%%%%%%%%%%%%%%%%%%%%%%%%%%%%%%%%%%%%%%%%%%%%%%%%%%%%%%%%%%%%%%%%%%%%
		%%%%%%%%%%%%%%%%%%%%%%%%%%%%%%%%%%%%%%%%%%%%%%%%%%%%%%%%%%%%%%%%%%%%%%%%%%%%%%%%%
			\begin{theorem} \label{theo:centralized_upperbound}
					The optimal sum DoF of the $K$-user MISO BC with Distributed CSIT satisfies
						\begin{align} \label{eq:centralized_bound}
								\DoF^{\DCSI}(\bm{\alpha}) \leq \DoF^{\CCSI}(\alpha^{(1)}).
						\end{align}
			\end{theorem}
				\begin{proof}
					The proof relies on the following key lemma, whose proof is relegated to Appendix~\ref{app:bounded_estimation_proof} for clarity.
						\begin{lemma}\label{lem:density_bound}
								Let  $\hat{\Hc}^{(j)} \triangleq \Hc +  \Pb^{-\alpha^{(j)}}\Delta^{(j)}$, where $\Hc$,  $\Delta^{(j)},\ {\forall j\in\Kc}$,  are independent continuous random variables satisfying the Bounded Density assumption. Then, the conditional {probability~density} function $f_{\Hc|\hat{\Hc}^{(1)},\dots,\hat{\Hc}^{(K)}}$ satisfies that
									\begin{align}
											\max_{\Hc} f_{\Hc|\hat{\Hc}^{(1)},\dots,\hat{\Hc}^{(K)}} = O\LB{\Pb^{\max_{j\in \Kc}\alpha^{(j)}}}\RB. \label{eq:proof_bound_eq}
									\end{align}			
						\end{lemma}  
					Let  us now consider a genie-aided scenario where all channel estimates are exchanged between all the TXs. 
					Such setting corresponds to a (logically) centralized scenario with a shared CSI composed by $\{\hat{\bH}^{(1)},\dots,\hat{\bH}^{(K)}\}$. Using Lemma~\ref{lem:density_bound}, we obtain that the peak of the probability density function  of this genie-aided scenario with multiple estimates has the same scaling as the centralized setting endowed only with $\hat{\bH}^{(1)}$. 								
					It then directly follows from the proof in \cite[Section V.8]{Davoodi2016} that the DoF of the genie-aided scenario, denoted by $\DoF^{\CCSI}_{genie}\left(\bm{\alpha}\right)$, is given by
						\begin{align} \label{eq:genie_aided_bound}
								\DoF^{\CCSI}_{genie}\left(\bm{\alpha}\right) = \DoF^{\CCSI}\big(\alpha^{(1)}\big).
						\end{align}
					From this equivalence, and the fact that providing with more information does not hurt, the proof is concluded.
				\end{proof} 
			\begin{remark}
					Lemma~\ref{lem:density_bound} is expected to hold for a more general group of distributions, i.e., including cases where the different noise variables are partially correlated. Indeed, for the Gaussian case where the noise variables $\{\Delta^{(j)}_{i,k}\}_{\forall j\in\Kc}$ are drawn from partially correlated jointly Gaussian distributions, it is easy to show analytically that \eqref{eq:proof_bound_eq} is also satisfied.\qed
			\end{remark}

			%%%%%%%%%%%%%%%%%%%%%%%%%%%%%%%%%%%%%%%%%%%%%%%%%%%%%%%%%%%%%%%%%%%		
				%%%%%%%%%%%%%%\begin{figure}[t]
						%%%%%%%%%%%%%%\centering
						%%%%%%%%%%%%%%\begin{subfigure}[t]{\columnwidth}
								%%%%%%%%%%%%%%\centering
								%%%%%%%%%%%%%%\input{figure_system_distributed_simple} 
								%%%%%%%%%%%%%%\caption{}\label{fig:distributed_system}
						%%%%%%%%%%%%%%\end{subfigure}
						%%%%%%%%%%%%%%~\\[2ex]
						%%%%%%%%%%%%%%\begin{subfigure}[t]{\columnwidth}
								%%%%%%%%%%%%%%\centering
								%%%%%%%%%%%%%%\input{figure_system_centralized_simple}
								%%%%%%%%%%%%%%\caption{}\label{fig:centralized_system}
						%%%%%%%%%%%%%%\end{subfigure}
						%%%%%%%%%%%%%%\caption{CSIT allocation for $(a)$  the assumed distributed scenario and $(b)$ a genie-aided centralized scenario  where all channel estimates $\mathbf{H}^{(j)}\in \mathbb{C}^{K\times K}$ are perfectly exchanged. }
				%%%%%%%%%%%%%%\end{figure}		
			%%%%%%%%%%%%%%%%%%%%%%%%%%%%%%%%%%%%%%%%%%%%%%%%%%%%%
		
		%%%%%%%%%%%%%%%%%%%%%%%%%%%%%%%%%%%%%%%%%%%%%%%%%%%%%%%%%%%%%%%%%%%%%%%%%%%%%%%%%
		%%%%%%%%%%%%%%%%%%%%%%%%%%%%%%%%%%%%%%%%%%%%%%%%%%%%%%%%%%%%%%%%%%%%%%%%%%%%%%%%%
		\subsection{\hspace{-1.25ex}{1)} Achievability:   Weak-CSIT Regime}\label{subse:main_1}
		%%%%%%%%%%%%%%%%%%%%%%%%%%%%%%%%%%%%%%%%%%%%%%%%%%%%%%%%%%%%%%%%%%%%%%%%%%%%%%%%%
		%%%%%%%%%%%%%%%%%%%%%%%%%%%%%%%%%%%%%%%%%%%%%%%%%%%%%%%%%%%%%%%%%%%%%%%%%%%%%%%%%
				%%%%%%%%%%%%
				\begin{theorem}			\label{theo:weak_CSIT}
						Let us assume that the $m$ first TXs have the same accuracy, i.e., that $\alpha^{(1)}=\dots=\alpha^{(m)}$,  $m<K$. 
						Then, the sum DoF of the K-user MISO BC with Distributed CSIT satisfies
							\begin{equation}
									\DoF^{\DCSI}(\bm{\alpha}) \geq \DoF^{\CCSI}(\alpha^{(1)}) 					\label{eq:distributed_opt_dof_K_2}
							\end{equation}								
						if  $\alpha^{(1)}\leq \alpha^{\Weak}_m$, where $\alpha^{\Weak}_m$ is defined as
							\begin{align}%$
									\alpha^{\Weak}_m\triangleq \frac{1}{1+K(K-m-1)}, %weak_CSIT
							\end{align}%$
						and is called the ``$m$-TX Weak-CSIT'' regime. For $m=1$, we simplify the notation and call it the ``Weak-CSIT'' regime.
				\end{theorem} 						
				%%%%%%%%%%%%
				%%%%%%%%%%%%
				\begin{proof}
						The result follows directly from the analysis of the proposed scheme presented in detail in  Section~\ref{se:APZF}. 
				\end{proof}		
			In the so-called Weak-CSIT regime, the upper bound of Theorem~\ref{theo:centralized_upperbound} is tight.	Surprisingly, for $m=1$, the most heterogeneous case, the DoF depends only on the CSI quality of TX~$1$, although with the downside of reducing the range of possible CSIT configurations. 
			For $m=K-1$, it holds that $\alpha^{\Weak}_{K-1}=1$ and hence the Weak-CSIT regime encloses every possible value of $\alpha\expo$, which is consistent with the simple use of single-stream Active-Passive Zero-Forcing (AP-ZF) in \cite{dekerret2012_TIT}. 
				\begin{remark}
						The fact that it is possible to achieve the DoF of the centralized upper bound with uniformed or badly informed TXs is a surprising result which is not expected to extend to many other CSIT configurations. Indeed, it can be intuitively seen using basic linear algebra that at least $K-1$ TXs are necessary to cancel $K-1$ ZF constraints. \qed
				\end{remark}

		\setcounter{subsection}{1}
		%%%%%%%%%%%%%%%%%%%%%%%%%%%%%%%%%%%%%%%%%%%%%%%%%%%%%%%%%%%%%%%%%%%%%%%%%%%%%%%%%
		%%%%%%%%%%%%%%%%%%%%%%%%%%%%%%%%%%%%%%%%%%%%%%%%%%%%%%%%%%%%%%%%%%%%%%%%%%%%%%%%%
			\subsection{\hspace{-1.25ex}{2)} {} Achievability: Extension to Arbitrary CSIT Configurations}\label{subse:main_2}
		%%%%%%%%%%%%%%%%%%%%%%%%%%%%%%%%%%%%%%%%%%%%%%%%%%%%%%%%%%%%%%%%%%%%%%%%%%%%%%%%%
		%%%%%%%%%%%%%%%%%%%%%%%%%%%%%%%%%%%%%%%%%%%%%%%%%%%%%%%%%%%%%%%%%%%%%%%%%%%%%%%%%
			Theorem~\ref{theo:weak_CSIT} presents the CSIT regime for which the upper bound of Theorem~\ref{theo:centralized_upperbound} is tight. In the following, we present a robust transmission scheme that builds on the scheme attaining Theorem~\ref{theo:weak_CSIT} but which is extended to adapt to any CSIT configuration. % without restriction. 
			The main challenge comes from the large number of CSIT scaling parameters, which leads to an even larger (combinatorially large) number of possible CSIT configurations depending on their relative values. Before presenting the achievability result, we define three terms that play an important role in the proposed scheme.

				\begin{definition}[Transmitting TXs]
						A TX is said to be a ``Transmitting TX'' if it is connected and sends information to the RXs. It may or may not use its instantaneous CSI for precoding. 
				\end{definition}				
										
			This definition is made necessary by the distributed nature of the CSIT. Indeed, in contrast to the centralized setting where adding antennas cannot reduce the performance \cite{Davoodi2018_ICC,Vaze2012b}, using additional antennas in the distributed setting can decrease the achievable DoF by creating additional interference. Hence, although we are considering a setting with $K$ TXs, it may be beneficial in some CSI configurations to ``turn off'' some  TXs and  use a smaller number of ``Transmitting TXs''.
				
				\begin{definition}[Active TXs]
						A TX is said to be an ``Active TX'' if it is a Transmitting TX and it makes use of its instantaneous CSI. 
				\end{definition}
							
				\begin{definition}[Passive TXs]
						A TX is said to be a ``Passive TX'' if it is a Transmitting TX but it does not make use of its instantaneous CSI. 
				\end{definition} 
			
			A more thorough explanation of these definitions and their relevance is provided later on, along with the description of the proposed scheme. 
			%In brief, a TX with poor CSIT could be detrimental as it would create interference at every RX. Yet, transmitting without using its instantaneous CSIT allows the good-CSIT TXs to partially correct that interference.  
			In brief, it will become clear that the two critical parameters to optimize are both the number of  ``Transmitting TXs''  and the number of ``Active TXs''. In relation to these two notions, we introduce the following definition.

				\begin{definition}[Transmission Mode~$(n,k)$]
					We define the Transmission Mode~$(n,k)$ as a transmission where only $k$~Transmitting TXs and $n\leq k$ Active TXs are used. 
				\end{definition}
				
			Building on these definitions, the following lower bound is exactly obtained by optimizing the performance of the proposed scheme over the different Transmission Modes.
						
			\begin{theorem}	\label{theo:theorem_achievable}
					The sum DoF of the $K$-user D-CSIT BC with CSIT scaling coefficients~$\bm{\alpha}$ is lower-bounded by $\DoF^{\APZF}(\bm{\alpha})$, obtained by solving the next linear program: 
						\begin{align}
								&\DoF^{\APZF}(\bm{\alpha})\ =\ \underset{\substack{\gamma_{n,k}}}{\mathrm{maximize}}~~ \sum_{k=2}^{K}~\sum_{n=1}^{k-1}\gamma_{n,k} \Big( 1+(k-1)\alpha^{(n)}\Big) \label{eq:Objective_primo}\\
									&\hspace{15ex}~~\mathrm{subject~to}~~ \sum_{k=2}^{K}~\sum_{n=1}^{k-1}\gamma_{n,k}=1, \hspace{2ex}\gamma_{n,k}\geq 0 \label{eq:time_sharing}\\
									&\hspace{17ex}\phantom{subject~to}~\sum_{k=2}^{K}~\sum_{n=1}^{k-1}d_{n,k}\gamma_{n,k}   \geq 0, \label{eq:interference_retransmission_condition} %\LB 1 - \alpha^{(n)} - k(k-n-1)\alpha^{(n)}\RB
						\end{align}  
					where $\gamma_{n,k}$ is a time-sharing variable representing the percentage of time allocated to the Transmission Mode~$(n,k)$, and $d_{n,k} \triangleq  1 - \alpha^{(n)} - k(k-n-1)\alpha^{(n)}$.
			\end{theorem}
				\begin{proof}
					The transmission scheme for a particular Transmission Mode is described in Section~\ref{se:APZF} and, building on this result, the explanation and proof of the theorem is given in Section~\ref{se:Full_scheme}.
				\end{proof}
			The transmission scheme and the achieved DoF are obtained by solving a simple linear programming problem with low complexity. 			
			Interestingly, the expression~$ 1+(k-1)\alpha^{(n)}$ in \eqref{eq:Objective_primo} corresponds to the DoF achieved in the $k$-user centralized setting with $k$~TXs having received a CSIT of quality~$\alpha^{(n)}$ (See \eqref{eq:centralized_DoF}).
				\begin{remark} \label{corollary_homogeneous_CSIT}
					The linear program of Theorem~\ref{theo:theorem_achievable} depends only on the $K-1$~best CSIT coefficients and not on $\alpha^{(K)}$. 
					This property was already highlighted in \cite{dekerret2012_TIT} and follows from the fact that it is possible to solve $K-1$ linear equations with $K-1$~Active TXs, which means removing interference at $K-1$~users, and thus serving $K$~users at the same time. As consequence, it can always be assumed that $\alpha^{(K)}=0$ without reducing the DoF.\qed
				\end{remark}	
				Furthermore, we can show that time sharing between only two Transmission Modes is sufficient.
				\begin{corollary}\label{theorem_two_phases}
					The optimal solution of Theorem~\ref{theo:theorem_achievable} is composed only of two Transmission Modes,~$(n_1,k_1)$ and $ (n_2,k_2)$, such that 
						\eqm{
							\gamma_{n_1,k_1}> 0,\ \gamma_{n_2,k_2}\geq 0,\ \gamma_{n,k}= 0,
						}
					for any pair $(n,k)\notin \{(n_1,k_1), (n_2,k_2)\}$.  
				\end{corollary}
				\begin{proof}
						The proof is relegated to Appendix~\ref{app:two_phases}.
				\end{proof}
				\begin{remark}				
					The Distributed CSIT setting here assumed and the Delayed CSIT setting share an important feature: It is not possible to cancel out all the interference. 
					However, the means for solving this challenge at each setting differ considerably. 
					For the Distributed CSIT setting, the best strategy consists on overloading the transmission so as to use the interference created as side information that is useful at different RXs. 
					In addition, the more RXs receive interference, the more overloaded the transmission has to be. Corollary~\ref{theorem_two_phases} reflects the main insight: 
					If there are two modes of transmission, the first one is a generator of interference, i.e., it creates side-information at the RXs through the overloaded transmission.
					Then, it relies on a successive second Transmission Mode to retransmit some of this interference (side information) in order to decode the overloaded transmission. 
					When only one mode is used, the interference is directly retransmitted through rate splitting using the common data symbol.\qed
				\end{remark}		

			%%%%%%%%%%%%%%%%%%%%%%%%%%%%%%%%%%%%%%%%%%%%%%%%%%%%%
				\begin{figure}[t]\centering
					\begin{tikzpicture}
							\begin{axis}[	width=0.65*0.95*0.94\columnwidth,	height=0.65*0.86*0.87\columnwidth,	at={(0\columnwidth,0\columnwidth)},
										scale only axis,	separate axis lines,
										every outer x axis line/.append style={black},
										every x tick label/.append style={font=\color{black}},
										xmin=0,	xmax=1,
										every outer y axis line/.append style={black},
										every y tick label/.append style={font=\color{black}},
										ymin=.0,	ymax=4,
										samples=11,
										ylabel={Sum DoF},	xlabel=$\alpha^{(1)}$,	
										xtick={ 0, 0.1,0.2,0.3,0.4,0.5,0.6,0.7,0.8,0.9,1},											
										axis background/.style={fill=white},
										y label style={at={(axis description cs:-0.08,.51)},rotate=0,anchor=south},
										x label style={at={(axis description cs:0.5,-.07)}},
										axis background/.style={fill=white}, grid=both, grid style={line width=.1pt, draw=gray!10},	major grid 	style={line width=.2pt,draw=gray!50}, 
										legend style={at={(0.025,0.975)},draw=none,anchor=north west}
							]

								\addplot[dashed, line width = 2.5pt,domain=0:1] {1+3*\x} ;

								\addplot[solid, line width = .75pt,domain=0.2:1,mark=square*,mark options={solid,scale=1.5,fill=cyan,draw=black},draw=black,domain=0.2:1,samples=9] plot (\x,{1.0075 +	2.9625*\x}) node[right] {};					
								\addplot[solid, line width = .75pt,domain=0:1,mark=triangle*,mark options={solid,scale=2.5,fill=red,draw=black},draw=black,domain=0.2:1,samples=9]{1.1775+2.1125*\x};
								%\addplot[solid, line width = .75pt,domain=0.2:1,mark=*,mark options={solid,scale=1.5,fill=yellow,draw=black},draw=black,domain=0.2:1,samples=9] plot (\x,{1.25 +	1.75*\x}) node[right] {};				
								\addplot[thick,dash pattern={on 7pt off 2pt on 1pt off 3pt}, line width = 1.75pt,domain=0.2:1,draw=black,domain=0.2:1,samples=9] plot (\x,{1.25 +	1.75*\x}) node[right] {};				
								
								\addplot[solid, line width = .75pt,domain=0:1,mark=x,mark options={solid,scale=2,draw=black}] {1} ;
										
								\legend{ Centralized Upper bound, $\alpha^{(3)}=0.99\alpha^{(1)}$, $\alpha^{(3)}=0.75\alpha^{(1)}$ ,Lower bound $\alpha^{(3)} = 0$\vspace{-.5ex}}
									
								\addplot[solid, line width = .75pt,domain=0:1,mark=square*,mark options={solid,scale=1.5,fill=green!75!black,draw=black},draw=black,domain=0:0.2,samples=3]{1+3*\x};
								\addplot[solid, line width = .75pt,domain=0:1,mark=triangle*,mark options={solid,scale=2.5,fill=green!75!black,draw=black},draw=black,domain=0:0.2,samples=3]{1+3*\x};
								\addplot[solid, line width = .75pt,domain=0:1,mark=*,mark options={solid,scale=1.5,fill=green!75!black,draw=black},draw=black,domain=0:0.2,samples=3]{1+3*\x};
								
								\draw[draw=black!90!white,  thick, densely dotted] (0.2,0.0) -- (0.2,4);
								\node at (.1, 1.93) [rectangle, align=center, text=blue, draw=white!50!white] (v1p00) {\small Weak-CSIT\\ \small Region};
								\draw[<->] (.005,2.21) -- (0.195,2.21);

								\node at (.45, 1.2) [rectangle, align=center, text=black, draw=white!50!white] (v100) {Zero-Forcing $\forall \alpha^{(3)}$};
								
								\draw[black, fill = cyan, fill opacity = 0.15, semithick, even odd rule]
								(0,0) rectangle (0.2,4) ;
							\end{axis}
					\end{tikzpicture}%
					\caption{Sum DoF for the case with $K=4$~TXs, $\alpha^{(2)}=\alpha^{(1)}$, and $\alpha^{(4)} =0$, for different values of $\alpha^{(3)}$ as a function of $\alpha^{(1)}$.}\label{fig:DoF_evolution}\vspace{-2ex}
				\end{figure}
			%%%%%%%%%%%%%%%%%%%%%%%%%%%%%%%%%%%%%%%%%%%%%%%%%%%%%

			We show in Fig.~\ref{fig:DoF_evolution} the DoF as a function of~$\alpha^{(1)}$ for a setting with $K=4$~TXs and $\alpha^{(2)}=\alpha^{(1)}$, and we compare the achievable DoF with the centralized upper bound for different values of~$\alpha^{(3)}$. 
			The centralized upper bound is achieved for any $\alpha\expo\leq 0.2 = \alpha^{\text{Weak}}_2$, no matter the value of $\alpha^{(3)}$ (as stated in  Theorem~\ref{theo:weak_CSIT}). 
			The upper bound is also achieved when $\alpha^{(3)}$ becomes equal to $\alpha^{(1)}$, which is  consistent with the results in \cite{dekerret2012_TIT}. 

			In Fig.~\ref{fig:dof_casesTX}, we show the DoF achieved by AP-ZF for a network with $K=4$~TXs when fixing the number of Transmitting TXs (i.e., the value of $k$ in Theorem~\ref{theo:theorem_achievable}) for the specific case where $\alpha^{(1)}=1$, $\alpha^{(3)}=\alpha^{(4)}=0$, and $\alpha^{(2)}$ varies from $0$ to $1$. Depending on the value of $\alpha^{(2)}$, it is optimal to use either $2$ Transmitting TXs or $3$ Transmitting TXs, but never $K=4$ Transmitting TXs.
			
			%%%%%%%%%%%%%%%%%%%%%%%%%%%%%%%%%%%%%%%%%%%%%%%%%%%%%		
				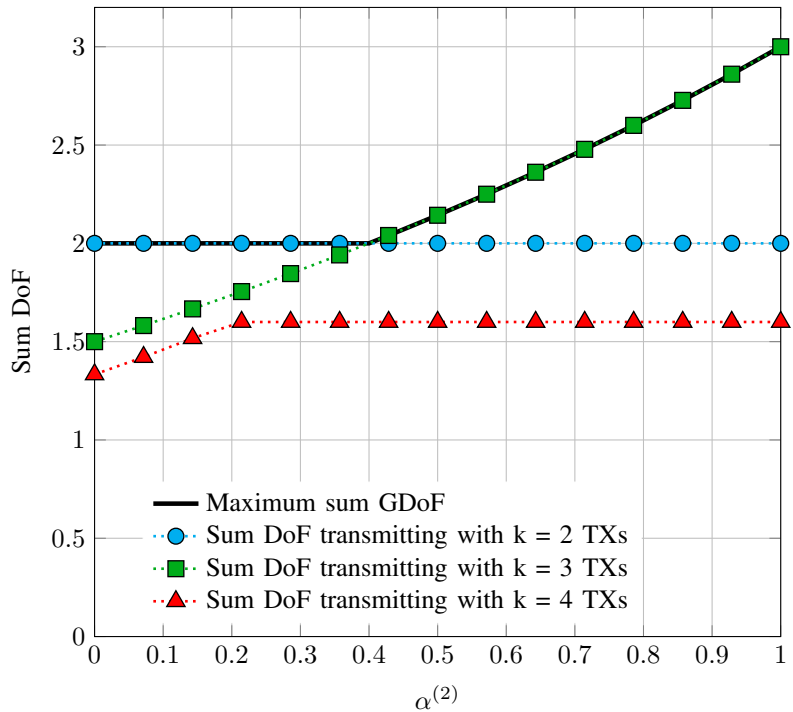
\begin{figure}[tb] \centering		
					\begin{tikzpicture}
								\begin{axis}[%
										width=0.65*0.85\columnwidth,
										height=0.65*.78\columnwidth,
										at={(-0,0)},
										scale only axis,
										xmin=0,
										xmax=1,
										x label style={at={(axis description cs:0.5,-.13)},anchor=south},
										y label style={at={(axis description cs:-0.08,.5)},anchor=south},
										xlabel={$\alpha^{(2)}$},
										xmajorgrids,
										ymin=0,
										ymax=3,
										ylabel={Sum DoF},
										ymajorgrids,
										axis background/.style={fill=white},
										legend style={at={(0.07,0.021)},anchor=south west,legend cell align=left,align=left,draw=white!15!white}
								]
											%\addplot [color=black,only marks,mark=square*,mark options={solid,fill=gray}]
											\addplot [color=black,line width = 1.725pt]
												table[row sep=crcr]{%
											0	2\\
											0.0714285714285714	2\\
											0.142857142857143	2\\
											0.214285714285714	2\\
											0.285714285714286	2\\
											0.357142857142857	2\\
											0.400000000000000	2\\
											0.428571428571429	2.03999999978014\\
											0.5	2.14285714265853\\
											0.571428571428571	2.24999999982952\\
											0.642857142857143	2.36170212752294\\
											0.714285714285714	2.47826086946582\\
											0.785714285714286	2.59999999993762\\
											0.857142857142857	2.72727272724246\\
											0.928571428571429	2.86046511627554\\
											1	3\\
											};
											\addlegendentry{Maximum sum GDoF};

											\addplot [color=cyan, dotted, mark size=3pt, line width = 1pt, mark=*,mark options={solid,fill=cyan,draw=black,line width=.5pt}]
												table[row sep=crcr]{%
											0	2\\
											0.0714285714285714	2\\
											0.142857142857143	2\\
											0.214285714285714	2\\
											0.285714285714286	2\\
											0.357142857142857	2\\
											0.428571428571429	2\\
											0.5	2\\
											0.571428571428571	2\\
											0.642857142857143	2\\
											0.714285714285714	2\\
											0.785714285714286	2\\
											0.857142857142857	2\\
											0.928571428571429	2\\
											1	2\\
											};
											\addlegendentry{Sum DoF transmitting with k = 2 TXs};

											\addplot [color=green!75!black!90!blue,dotted, mark size=3pt, line width = 1pt, mark=square*,mark options={solid,fill=green!75!black!90!blue,draw=black,line width=.5pt}]
												table[row sep=crcr]{%
											0	1.49999999981254\\
											0.0714285714285714	1.58181818160033\\
											0.142857142857143	1.66666666642696\\
											0.214285714285714	1.75471698088902\\
											0.285714285714286	1.84615384591237\\
											0.357142857142857	1.94117647035423\\
											0.428571428571429	2.03999999978014\\
											0.5	2.14285714265853\\
											0.571428571428571	2.24999999982952\\
											0.642857142857143	2.36170212752294\\
											0.714285714285714	2.47826086946582\\
											0.785714285714286	2.59999999993762\\
											0.857142857142857	2.72727272724246\\
											0.928571428571429	2.86046511627554\\
											1	3\\
											};
											\addlegendentry{Sum DoF transmitting with k = 3 TXs};

											\addplot [color=red,dotted, mark size=4pt, line width = 1pt, mark=triangle*,mark options={solid,fill=red,draw=black,line width=.5pt}]
												table[row sep=crcr]{%
											0	1.33333333332171\\
											0.0714285714285714	1.42148760324936\\
											0.142857142857143	1.51724137930266\\
											0.214285714285714	1.59999999641307\\
											0.285714285714286	1.6\\
											0.357142857142857	1.59999999999323\\
											0.428571428571429	1.59999999610031\\
											0.5	1.59999999998144\\
											0.571428571428571	1.5999999999134\\
											0.642857142857143	1.59999999091723\\
											0.714285714285714	1.6\\
											0.785714285714286	1.59999999999999\\
											0.857142857142857	1.59999999999999\\
											0.928571428571429	1.59999999999999\\
											1	1.59999999999999\\
											};
											\addlegendentry{Sum DoF transmitting with k = 4 TXs};
						\end{axis}
					\end{tikzpicture}%
					\caption{Sum DoF of the $K=4$ MISO BC with distributed CSIT as a function of $\alpha^{(2)}$, with $\alpha^{(1)}=1$, and $\alpha^{(3)}=\alpha^{(4)}=0$.} \label{fig:dof_casesTX}
				\end{figure}%
			%%%%%%%%%%%%%%%%%%%%%%%%%%%%%%%%%%%%%%%%%%%%%%%%%%%%%	  
					
	%%%%%%%%%%%%%%%%%%%%%%%%%%%%%%%%%%%%%%%%%%%%%%%%%%%%%%%%%%%%%%%%%%%%%%%%%%%%%%%%%
	%%%%%%%%%%%%%%%%%%%%%%%%%%%%%%%%%%%%%%%%%%%%%%%%%%%%%%%%%%%%%%%%%%%%%%%%%%%%%%%%%
	%%%%%%%%%%%%%%%%%%%%%%%%%%%%%%%%%%%%%%%%%%%%%%%%%%%%%%%%%%%%%%%%%%%%%%%%%%%%%%%%%
	%%%%%%%%%%%%%%%%%%%%%%%%%%%%%%%%%%%%%%%%%%%%%%%%%%%%%%%%%%%%%%%%%%%%%%%%%%%%%%%%%
		\section{A Simple Example}\label{se:example}
	%%%%%%%%%%%%%%%%%%%%%%%%%%%%%%%%%%%%%%%%%%%%%%%%%%%%%%%%%%%%%%%%%%%%%%%%%%%%%%%%%
	%%%%%%%%%%%%%%%%%%%%%%%%%%%%%%%%%%%%%%%%%%%%%%%%%%%%%%%%%%%%%%%%%%%%%%%%%%%%%%%%%
	%%%%%%%%%%%%%%%%%%%%%%%%%%%%%%%%%%%%%%%%%%%%%%%%%%%%%%%%%%%%%%%%%%%%%%%%%%%%%%%%%
	%%%%%%%%%%%%%%%%%%%%%%%%%%%%%%%%%%%%%%%%%%%%%%%%%%%%%%%%%%%%%%%%%%%%%%%%%%%%%%%%% 
		We present in the following a simple transmission scheme in a toy example so as to exemplify the key features of our approach and convey the main intuition in a clear manner. The full scheme achieving the DoF of Section~\ref{se:main} will be described in Section~\ref{se:APZF}. 

		We consider a $3$-user setting with $\alpha^{(1)}=0.1$, $\alpha^{(2)}=0$, and $\alpha^{(3)}=0$. 
		The conventional regularized Zero-Forcing would achieve a DoF of $1$\cite{dekerret2016_ISIT}, which is the same as for the no CSIT scenario. 
		We will show how it is possible to achieve a DoF of $1+2\alpha^{(1)}=1.2$, which is the DoF that would be achieved in a centralized setting if TX~$2$ and TX~$3$ had received the same estimate as TX~$1$ \cite{Davoodi2016}, such that there is no DoF loss from \emph{not} sharing the CSIT between the TXs.
		
				%%%%%%%%%%%%%%%%%%%%%%%%%%%%%%%%%%%%%%%%%%%%%%%%%%%%%
					\begin{figure*}[t]
						\centering
							\begin{overpic}[width=0.99\textwidth]{./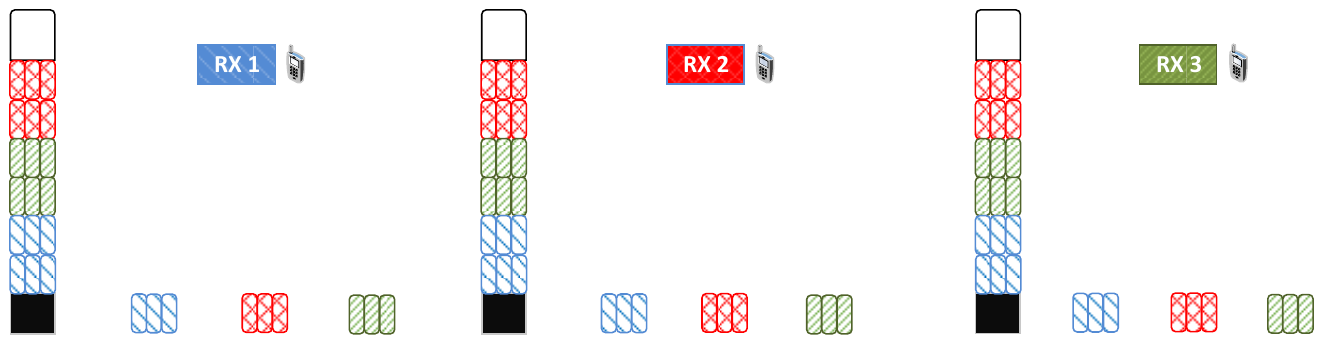}					
									{\put(0,-2){\parbox{\linewidth}{\resizebox{!}{0.95\height}{$\ \bH_{1,1}s_0\, +\, \bh^{\He}_1\bs_1 \, +\, \bh^{\He}_1\bs_2 \, +\  \bh^{\He}_1\bs_3$}}}}
									{\put(35,-2){\parbox{\linewidth}{\resizebox{!}{0.95\height}{$\bH_{2,1}s_0\ +\ \bh^{\He}_2\bs_1 \ + \bh^{\He}_2\bs_2 \ +\,   \bh^{\He}_2\bs_3$}}}}
									{\put(72.5,-2){\parbox{\linewidth}{\resizebox{!}{0.95\height}{$\,\bH_{3,1}s_0  +  \bh^{\He}_3\bs_1  + \bh^{\He}_3\bs_2  +  \bh^{\He}_3\bs_3$}}}}
									
									{\put(19,4.5){\parbox{\linewidth}{$\scriptstyle b_1$}}}
									{\put(27.5,4.5){\parbox{\linewidth}{$\scriptstyle c_1$}}}		
									{\put(46.5,4.5){\parbox{\linewidth}{$\scriptstyle a_1$}}}
									{\put(62,4.5){\parbox{\linewidth}{$\scriptstyle c_2$}}}
									{\put(82,4.5){\parbox{\linewidth}{$\scriptstyle a_2$}}}
									{\put(89.5,4.5){\parbox{\linewidth}{$\scriptstyle b_2$}}}
														
									{\put(4.5,4.5){\parbox{\linewidth}{$\scriptstyle a_1$}}}
									{\put(4.5,7.5){\parbox{\linewidth}{$\scriptstyle a_2$}}}
									{\put(4.5,10.75){\parbox{\linewidth}{$\scriptstyle c_1$}}}
									{\put(4.5,13.75){\parbox{\linewidth}{$\scriptstyle c_2$}}}
									{\put(4.5,16.5){\parbox{\linewidth}{$\scriptstyle b_1$}}}
									{\put(4.5,19.5){\parbox{\linewidth}{$\scriptstyle b_2$}}}
														
									{\put(49.5,24.5){\parbox{\linewidth}{\resizebox{!}{.8\height}{$\Pb$\ }}}}				
									{\put(40,24.75){\parbox{\linewidth}{\resizebox{!}{.8\height}{-----------------}}}}				
									{\put(64.5,3.65){\parbox{\linewidth}{\resizebox{!}{.8\height}{---$\Pb^{\alpha^{(1)}}$\!\!\!\!---}}}}				
									{\put(64.5,1.65){\parbox{\linewidth}{\resizebox{!}{1.6\height}{$\uparrow$}}}}				
									{\put(69,1.65){\parbox{\linewidth}{\resizebox{!}{1.6\height}{$\uparrow$}}}}				
									{\put(33,12.55){\parbox{\linewidth}{$\scriptstyle s_0\begin{cases}~\\~\vspace{7ex}\\~\\~\\\end{cases}$}}}

									\linethickness{0.5mm} \color{blue!85!white}%
											\put(0,0){\polygon(-1.5,-6)(-1.5,26)(30,26)(30,-6)}
									\linethickness{0.5mm} \color{red!85!black}%
											\put(0,0){\polygon(30.5,-6)(30.5,26)(69,26)(69,-6)}
									\linethickness{0.5mm} \color{green!75!black!90!blue}%
											\put(0,0){\polygon(69.5,-6)(69.5,26)(97.5,26)(97.5,-6)}																		
							\end{overpic}	
							~\\ \vspace{7ex} 
							\caption{Illustration of the received signals. Each RX receives its desired private data symbols and interference scaling both in $\Pb^{\alpha^{(1)}}$. Through superposition coding, it also receives the common data symbol~$s_0$ containing a mix of fresh desired data symbols (illustrated in white), and side information to remove interference (illustrated with the color of the relevant RX).} \label{fig:ISIT2016_first_scheme}
					\end{figure*}
				%%%%%%%%%%%%%%%%%%%%%%%%%%%%%%%%%%%%%%%%%%%%%%%%%%%%%
						
		%%%%%%%%%%%%%%%%%%%%%%%%%%%%%%%%%%%%%%%%%%%%%%%%%
		%%%%%%%%%%%%%%%%%%%%%%%%%%%%%%%%%%%%%%%%%%%%%%%%%
			\subsection{Encoding}
		%%%%%%%%%%%%%%%%%%%%%%%%%%%%%%%%%%%%%%%%%%%%%%%%%
		%%%%%%%%%%%%%%%%%%%%%%%%%%%%%%%%%%%%%%%%%%%%%%%%% 

			The transmission scheme consists in a single channel use during which $3$~private data symbols of  $\alpha^{(1)}\log_2(P)$~bits are sent to each user, thus leading to $9$~data symbols being sent in one channel use.
			Additionally, a common data symbol of rate $(1-\alpha^{(1)})\log_2(P)$~bits is broadcast from TX~$1$ to all users using superposition coding~\cite{Cover2006}. Note that the information contained in this common data symbol is not only composed of ``fresh'' information bits destined to one user, but is also composed of side information necessary for the decoding of the private data symbols, as will be detailed below.  
      The transmitted signal~$\xv\in \mathbb{C}^3$ is then equal to
				\begin{equation}
						\xv=\bs_1+\bs_2+\bs_3+\begin{bmatrix}1\\0\\0\end{bmatrix} s_0
				\end{equation}
			where
				\begin{itemize}
						\item $\bm{s}_i\in \mathbb{C}^3$ is a vector containing the 3  private data symbols destined to user~$i$, each one with power $P^{\alpha^{(1)}}/9$ and rate $\alpha^{(1)}\log_2(P)$~bits.
						\item $s_0$ is the common data symbol transmitted \emph{only from TX~$1$} and destined to be decoded at all users. 
						It is transmitted with power $P-P^{\alpha^{(1)}}$ and rate $(1-\alpha^{(1)})\log_2(P)$~bits. 
				\end{itemize}
			The signal received at RX~$i$, illustrated in Fig.~\ref{fig:ISIT2016_first_scheme}, is  given by
				\begin{equation} 
						\!y_i\!=\!\underbrace{\bH_{i,1}s_0}_{\doteq P}+\underbrace{\bh_i^{\He}\bm{s}_{1}}_{\doteq P^{\alpha^{(1)}}} +\underbrace{\bh_i^{\He}\bm{s}_{2}}_{\doteq P^{\alpha^{(1)}}} +\underbrace{\bh_i^{\He}\bm{s}_{3}}_{\doteq P^{\alpha^{(1)}}} \label{eq:example_1}
				\end{equation}
			where the power scaling is written under the bracket, and where the noise term has been neglected for clarity.

		%%%%%%%%%%%%%%%%%%%%%%%%%%%%%%%%%%%%%%%%%%%%%%%%%
		%%%%%%%%%%%%%%%%%%%%%%%%%%%%%%%%%%%%%%%%%%%%%%%%%
			\subsection{Interference Estimation and Quantization at TX~$1$} 
		%%%%%%%%%%%%%%%%%%%%%%%%%%%%%%%%%%%%%%%%%%%%%%%%%
		%%%%%%%%%%%%%%%%%%%%%%%%%%%%%%%%%%%%%%%%%%%%%%%%%
			The key element of the scheme is that the common data symbol~$s_0$ is used to convey side information, enabling each user to decode its desired private data symbols. More specifically, TX~$1$ uses its local CSIT~$\hat{\bH}^{(1)}$ to \emph{estimate} the interference terms $(\hat{\bh}^{(1)}_i)^{\He}\bm{s}_k$ generated by the first layer of transmission, for any $i, k$ such that $k\neq i$.  
			Then, TX~1 quantizes  and transmits these interference terms using the common data symbol~$s_0$. Each interference term has a variance scaling in~$\Pb^{\alpha^{\scalebox{.75}{\tiny$(1)$}}}$
			and is quantized using $\alpha^{(1)}\log_2(P)$~bits, such that the quantization noise can be made to remain at the noise floor using an appropriate uniform or  Lloyd quantizer~\cite{Cover2006}. 
			In total, the transmission of all the quantized estimated interference requires a transmission of~$6\alpha^{(1)}\log_2(P)$~bits.

			These  $6\alpha^{(1)}\log_2(P)$~bits can be transmitted via the data symbol~$s_0$ if $6\alpha^{(1)}\log_2(P) \leq (1-\alpha^{(1)})\log_2(P)$, which is the case for the example considered here since $6\times 0.1<1-0.1$. If the inequality is strict (as it is in this case),  $s_0$ carries some additional $(1-7\alpha^{(1)})\log_2(P)$ fresh-information bits  to any particular user. %---$0.3 \log_2(P)$~bits here---
			 
		%%%%%%%%%%%%%%%%%%%%%%%%%%%%%%%%%%%%%%%%%%%%%%%%%
		%%%%%%%%%%%%%%%%%%%%%%%%%%%%%%%%%%%%%%%%%%%%%%%%%
			\subsection{Decoding and DoF Analysis}
		%%%%%%%%%%%%%%%%%%%%%%%%%%%%%%%%%%%%%%%%%%%%%%%%%
		%%%%%%%%%%%%%%%%%%%%%%%%%%%%%%%%%%%%%%%%%%%%%%%%%
			It remains to verify that this transmission scheme achieves the claimed DoF. 
			Let us consider w.l.o.g. the decoding at RX~$1$, because the decoding at the other users is the same up to a circular permutation of the RX's indices. Note that signals at the noise floor are systematically omitted.
			
			Using successive decoding~\cite{Cover2006}, the common data symbol~$s_0$ is decoded first, followed by the private data symbols~$\bs_1$. The data symbol~$s_0$ of rate of $(1-\alpha^{(1)})\log_2(P)$~bits can be decoded with a vanishing probability of error as its effective SNR can be seen in \eqref{eq:example_1} to scale in $P^{1-\alpha^{(1)}}$. 
			Upon decoding~$s_0$, the quantized estimated interference $(\hat{\bh}^{(1)}_1)^{\He}\bm{s}_2$ is obtained up to the quantization noise. 
			RX~$1$ has then decoded
				\begin{align}
						&\underbrace{(\hat{\bh}^{(1)}_1)^{\He}\bm{s}_2}_{\doteq P^{\alpha^{(1)}}} = \bh_1^{\He}\bm{s}_2+\underbrace{{\Pb^{-\alpha^{(1)}}}\!(\bm{\delta}^{(1)}_1)^{\He}\bm{s}_2}_{\doteq P^{0}}.				\label{eq:example_2}
				\end{align}
			Since the quantization noise is at the noise floor, it is neglected in the following. 
			This means that the interference term $\bh_1^{\He}\bm{s}_2$ can be suppressed up to the noise floor at RX~$1$.	Proceeding in the same way with $(\hat{\bh}^{(1)}_1)^{\He}\bm{s}_3$, the remaining signal at RX~$1$ is 				\begin{equation}
						y_1= \bh_1^{\He}\bm{s}_1.				\label{eq:example_3}
				\end{equation}
			RX~$1$ can form a virtual received vector~$\by_1^{\mathrm{v}}\in\mathbb{C}^{3}$ by bringing together the signal in~\eqref{eq:example_3} and the  interference terms $(\hat{\bh}_2^{(1)})^{\He}\bm{s}_1$  and $(\hat{\bh}_3^{(1)})^{\He}\bm{s}_1$ obtained through~$s_0$. Therefore,~$\by_1^{\mathrm{v}}\in\mathbb{C}^{3}$ is defined as
				\begin{equation}
						\by_1^{\mathrm{v}}\triangleq 
							\begin{bmatrix}
									\bh_1^{\He}\\
									(\hat{\bh}_2^{(1)})^{\He} \\
									(\hat{\bh}_3^{(1)})^{\He}
							\end{bmatrix}\bm{s}_1.			\label{eq:example_4}
				\end{equation}
			Each component of $\by_1^{\mathrm{v}}$ has an effective SNR scaling in $P^{\alpha^{(1)}}$ such that RX~$1$ can decode with a vanishing error probability its $3$~data symbols of rate $\alpha^{(1)}\log_2(P)$~bits.

			Considering the $3$~users, $9\alpha^{(1)}\log_2(P)$~bits have been transmitted through the private data symbols and $(1-7\alpha^{(1)})\log_2(P)$~bits through the common data symbol~$s_0$, which yields a sum DoF of $1+2\alpha^{(1)}$. %%%%%% For $\alpha^{(1)}\leq 1/7$, this scheme achieves the optimal sum DoF. By introducing some more advanced ZF precoding leads to a lower amount of retransmitted interference, and hence a larger domain in which the scheme is DoF optimal.
				\begin{remark}
						Interestingly, the above scheme is based on interference estimation, quantization, and retransmission, in a similar fashion as done in the different context of precoding with delayed CSIT (see e.g. \cite{Gou2012,Yang2013,dekerret2016_ITW}). Yet, we exploit in this work the distributed nature of the CSIT instead of the delayed knowledge of the CSIT, such that the scheme  estimates and quantizes the interference \emph{before even being generated}.\qed
				\end{remark}

	\FloatBarrier
	%%%%%%%%%%%%%%%%%%%%%%%%%%%%%%%%%%%%%%%%%%%%%%%%%%%%%%%%%%%%%%%%%%%%%%%%%%%%%%%%%
	%%%%%%%%%%%%%%%%%%%%%%%%%%%%%%%%%%%%%%%%%%%%%%%%%%%%%%%%%%%%%%%%%%%%%%%%%%%%%%%%%
	%%%%%%%%%%%%%%%%%%%%%%%%%%%%%%%%%%%%%%%%%%%%%%%%%%%%%%%%%%%%%%%%%%%%%%%%%%%%%%%%%
	%%%%%%%%%%%%%%%%%%%%%%%%%%%%%%%%%%%%%%%%%%%%%%%%%%%%%%%%%%%%%%%%%%%%%%%%%%%%%%%%%
		\section{Transmission Mode~$(n,k)$ with $n$~Actives TXs and $k$ Transmitting TXs}\label{se:APZF}
	%%%%%%%%%%%%%%%%%%%%%%%%%%%%%%%%%%%%%%%%%%%%%%%%%%%%%%%%%%%%%%%%%%%%%%%%%%%%%%%%%
	%%%%%%%%%%%%%%%%%%%%%%%%%%%%%%%%%%%%%%%%%%%%%%%%%%%%%%%%%%%%%%%%%%%%%%%%%%%%%%%%%
	%%%%%%%%%%%%%%%%%%%%%%%%%%%%%%%%%%%%%%%%%%%%%%%%%%%%%%%%%%%%%%%%%%%%%%%%%%%%%%%%%
		We present in this section the Transmission Mode $(n,k)$ with $n$~Active TXs and $k$ Transmitting TXs. 
		We  describe the main structure of the transmission in Section~\ref{se:weak_CSIT:encoding}, before describing in detail the precoding scheme in Section~\ref{subse:apzf_expl}. 
		The received signal is then studied in Section~\ref{se:RX_signal}, and the achieved DoF is computed in Section~\ref{se:weak_CSIT:quantization} and  Section~\ref{se:DoF}.
		As main insight, a transmission using $n$~\emph{Active TXs} can reduce the interference power received at $n$~RXs by a factor $P^{\alpha^{(n)}}$, i.e., as $n$ increases, it holds that:
		\begin{itemize}
			\item The interference power is reduced by a \emph{smaller} factor.
			\item The interference power is reduced at \emph{more} RXs.
		\end{itemize} 

		%%%%%%%%%%%%%%%%%%%%%%%%%%%%%%%%%%%%%%%%%%%%%%%%%
		%%%%%%%%%%%%%%%%%%%%%%%%%%%%%%%%%%%%%%%%%%%%%%%%%
		%%%%%%%%%%%%%%%%%%%%%%%%%%%%%%%%%%%%%%%%%%%%%%%%%
			\subsection{Encoding}\label{se:weak_CSIT:encoding}
		%%%%%%%%%%%%%%%%%%%%%%%%%%%%%%%%%%%%%%%%%%%%%%%%%
		%%%%%%%%%%%%%%%%%%%%%%%%%%%%%%%%%%%%%%%%%%%%%%%%% 
			The Transmission Mode $(n,k)$ only considers $k$~Transmitting TXs, and thus only $k$~RXs are served simultaneously.
			Let $\Uc$ denote the set of RXs served and let us assume w.l.o.g. that the served RXs are the $k$ first users, such that $\Uc=\{1,\dots,k\}$. 
			The transmitted signal~$\xv\in \mathbb{C}^k$ is then given by
				\begin{equation}
						\xv= \begin{bmatrix}1\\\bm{0}_{k-1\times 1}\end{bmatrix} s_0 + \sum_{i=1}^k\bT^{\APZF}_{i}\bs_i 				\label{eq:APZF_1}
				\end{equation}
			where
				\begin{itemize}
						\item $\bm{s}_i\in \mathbb{C}^{k-n}$ contains $k-n$ data symbols destined to RX~$i$, which we hence denote as \emph{private}, each one of rate $\alpha^{(n)}\log_2(P)$~bits and power $P^{\alpha^{(n)}}\nicefrac{}{(k(k-n))}$, distributed in an i.i.d. manner. They are precoded with the AP-ZF precoder $\bT^{\APZF}_{i}\in \mathbb{C}^{k\times (k-n)}$ with $n$ active TXs as described in detail in Section~\ref{subse:apzf_expl}. 
						\item $\bm{s}_0\in \mathbb{C}$ is a data symbol destined to be decoded at all users and that we hence denote as \emph{common}, transmitted at rate~$(1-\alpha^{(n)})\log_2(P)$~bits and power $P-P^{\alpha^{(n)}}$. 
				\end{itemize}
			Note that $k-n$ data streams are sent to each RX, but each RX has only one antenna. 
			This overloaded transmission is necessary to take advantage of the $k-n-1$ interference terms generated by the RX's symbols at the other RXs, following the intuition from \cite{MaddahAli2012} that interference can be used as side information. 
			This is detailed in Section~\ref{se:RX_signal}.
			A total of $k(k-n)\alpha^{(n)}\log_2(P)$~bits are sent in one channel use through the private data symbols. Furthermore, an additional data symbol of data rate $(1-\alpha^{(n)})\log_2(P)$~bits is broadcast from TX~$1$. Importantly, we will show that this common data symbol~$s_0$ does not only contain new information bits, but also side information to enable the successful decoding of the private data symbols.

			%\begin{figure}
				%\centering
					%\input{figure_transmit_signal}
					%\caption{Transmitted signals with $n$ \emph{Active TXs} over $K$ transmitting TXs} \label{fig:transmitted_signal}
			%\end{figure}

		%%%%%%%%%%%%%%%%%%%%%%%%%%%%%%%%%%%%%%%%%%%%%%%%%
		%%%%%%%%%%%%%%%%%%%%%%%%%%%%%%%%%%%%%%%%%%%%%%%%%
			\subsection{Precoding: AP-ZF with $n$ Active TXs}\label{subse:apzf_expl}
		%%%%%%%%%%%%%%%%%%%%%%%%%%%%%%%%%%%%%%%%%%%%%%%%%
		%%%%%%%%%%%%%%%%%%%%%%%%%%%%%%%%%%%%%%%%%%%%%%%%% 
			The proposed precoder can be decoupled such that the precoder for each RX is computed independently of the other RXs up to a power normalization. We describe now the AP-ZF precoder serving a specific RX~$i$ with $n$~Active TXs. 
			%We describe now the AP-ZF precoder serving a specific RX~$i$ with $n$~Active TXs. 
			This precoder allows us to transmit $k-n$~streams to RX~$i$ while reducing the interference at the $n$ following RXs, i.e., at RXs~$(i+t) \text{ mod } [k]+1$, $\forall t\in \{1,\dots,n\}$. 
			For ease of notation, we omit in the following the modulo operation  as it is clear what an index bigger than~$k$ refers to. The precoder is obtained from distributed precoding at all TXs such that
				\begin{equation}
						\bT^{\APZF}_i=
								\begin{bmatrix}
										\be_1^{\trans}\bT^{\APZF(1)}_i\\\be_2^{\trans}\bT^{\APZF(2)}_i\\ \vdots\\\be_k^{\trans}\bT^{\APZF(k)}_i
								\end{bmatrix}, \label{eq:APZF_2}
				\end{equation}
			where $\be^{\trans}_{\ell}$ refers to the ${\ell}$-th row of the identity matrix $\bI_{k\times k}$, and $\bT^{\APZF(j)}_i$ denotes the AP-ZF precoder computed at TX~$j$. We will therefore consider the design of $\bT^{\APZF(j)}_i$ at TX~$j$. 
				\begin{remark}
						Note that, although TX~$j$ computes the full precoder~$\bT^{\APZF(j)}_i$, only some coefficients are effectively used for the transmission due to the distributed precoding configuration, as made clear in \eqref{eq:APZF_2}. \qed
				\end{remark}
			As a preliminary,  let us define the \emph{Active Channel} $\bH_{\mathrm{A}}\in\mathbb{C}^{n\times n}$ as the channel coefficients from the Active TXs (TX~$1$ to TX~$n$) to the RXs whose received interference is reduced (RX~$i+1$ to RX~$i+n$), i.e.,
				\begin{equation}
						\bH_{\mathrm{A}}\triangleq \bH_{i+1:i+n, 1:n}.	\label{eq:APZF_3}
				\end{equation}
			%%%%%%%%%%%%%%%%%%%%%%%%%%%%%%%%%%%%%%%%%%%%%%%%%%%%%
				\begin{figure}
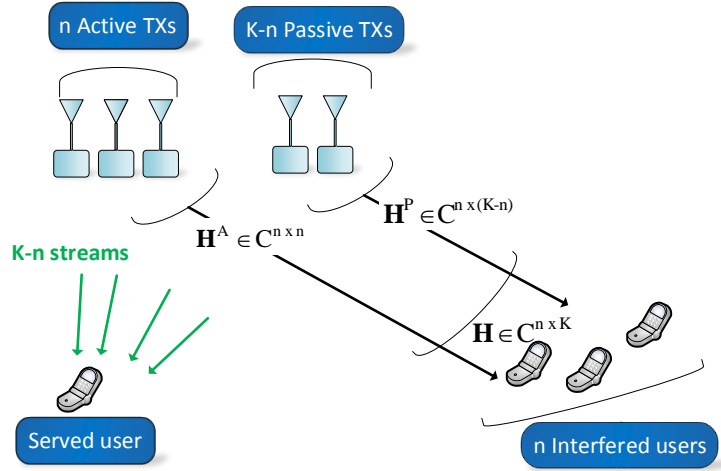
 \centering\vspace{1ex}
						\subfile{Bazco_TIT2018_FIG_setting_APZF_one}%
						\caption{AP-ZF illustration: The number of Active TXs~($n$) determines the number of RXs at which the interference is reduced, whereas the number of Passive TXs ($k-n$) determines the number of data streams that each RX can receive.}%
						\label{channel_description}%
				\end{figure}
			%%%%%%%%%%%%%%%%%%%%%%%%%%%%%%%%%%%%%%%%%%%%%%%%%%%%%
			
			\noindent Similarly, the \emph{Passive Channel} $\bH_{\mathrm{P}}\in\mathbb{C}^{n\times (k-n)}$ contains the channel coefficients from the Passive TXs (TX~$n+1$ to TX~$k$) to the RXs with reduced interference (RX~$i+1$ to RX~$i+n$), i.e.,
				\begin{equation}
						\bH_{\mathrm{P}}\triangleq \bH_{i+1:i+n, n+1:k}. \label{eq:APZF_4}
				\end{equation}
			The Passive TXs do not use their instantaneous CSIT. 
			Hence, the precoder applied at the Passive TXs (passive precoder) is an arbitrarily chosen deterministic full-rank matrix denoted as $\lambda^{\APZF}_i\bT^{\mathrm{P}}_i\in\mathbb{C}^{(k-n)\times  (k-n)}$, where $\lambda^{\APZF}_i$ is used to satisfy an average sum power constraint and is detailed further down. % (see its exact value in \eqref{eq:APZF_5})--

			On the other hand, every Active TX~$j$, $\forall j\in\{1,\dots,n\}$, computes~$\bT^{\APZF(j)}_i \in\mathbb{C}^{k\times k-n}$ only on the basis of its own available CSIT~$\hat{\bH}^{(j)}$, such that
				\begin{equation}
						\bT^{\APZF(j)}_i=\lambda^{\APZF}_i\begin{bmatrix}\bT^{\mathrm{A}(j)}_i\\\bT^{\mathrm{P}}_i\end{bmatrix}.  			\label{eq:APZF_5}
				\end{equation}
			The precoder applied at the Active TXs (active precoder) is denoted as $\bT^{\mathrm{A}(j)}_i$ and computed as
				\begin{equation}
						\bT^{\mathrm{A}(j)}_i=-\LB \!\!(\hat{\bH}_{\mathrm{A}}^{(j)})^{\He}\hat{\bH}^{(j)}_{\mathrm{A}} +\frac{1}{P}\bI_{n}\!\RB^{-1}(\hat{\bH}^{(j)}_{\mathrm{A}})^{\He} \hat{\bH}^{(j)}_{\mathrm{P}}\bT^{\mathrm{P}}_i. 
						\label{eq:APZF_6}
				\end{equation}
			%where $P$ is the sum power of the data symbol transmitted. 
				\begin{remark}
						The design of the active precoder in \eqref{eq:APZF_6} is an extension of the AP-ZF precoder introduced in \cite{dekerret2012_TIT}. 
						Intuitively, the $n$ Active TXs invert the channel to the $n$ chosen RXs so as to cancel the interference generated by the Passive TXs. 
						The number of Passive TXs limits the rank of the transmitted signal, while the number of Active TXs limits the number of users whose received interference is attenuated.\qed
				\end{remark}
			The AP-ZF precoder $\bT^{\APZF}_i \in \mathbb{C}^{k\times k-n}$ practically used in the transmission and defined in \eqref{eq:APZF_2} can be written as
				\begin{equation}
						\bT^{\APZF}_i\triangleq \lambda^{\APZF}_i
								\begin{bmatrix}
										\be_1^{\Trans}\bT^{\mathrm{A}(1)}_i\\
										\vdots\\
										\be_{n}^{\Trans}\bT^{\mathrm{A}(n)}_i\\
										\bT^{\mathrm{P}}_i
								\end{bmatrix}, \label{eq:APZF_7}
				\end{equation} 
			%where $\be_i \in \mathbb{C}^{n}$ for $i\in [n]$ is the $i$th row of the identity~$\bI_n$ and
			where the normalization coefficient~$\lambda^{\APZF}_i$ is chosen as
				\begin{equation}
						\lambda^{\APZF}_i \triangleq \frac{1}{\sqrt{\E\LSB \left\|\begin{bmatrix}-\LB \bH_{\mathrm{A}}^{\He}\bH_{\mathrm{A}} +\frac{1}{P}\bI_{n}\RB^{-1} \bH_{\mathrm{A}}^{\He} \bH_{\mathrm{P}}\bT^{\mathrm{P}}_i\\\bT^{\mathrm{P}}_i\end{bmatrix}\right\|_{\Fro}^2\RSB}}.	\label{eq:APZF_8}
				\end{equation}
			This normalization constant~$\lambda^{\APZF}_i$ requires only statistical CSI and can hence be computed at every TX, even the passive ones. It ensures that an average sum-power  normalization constraint is satisfied, i.e., that
				\begin{equation}
						\E\LSB \|\bT^{\APZF}_i\|_{\Fro}^2\RSB=1. 						\label{eq:APZF_9}
				\end{equation}  
			The fundamental property of AP-ZF is that it effectively achieves interference reduction at the $n$~RXs up to the worst accuracy across the Active TXs, as stated in the following lemma.
				\begin{lemma}	\label{lemma_fundamental_APZF}
						The AP-ZF precoder with $n$ Active TXs satisfies
							\begin{equation}
									\left\|\bh_{\ell}^{\He}\bT^{\APZF}_i\right\|^2_{\Fro} \dotleq P^{-\alpha^{(n)}},\  \forall \ell \in \{i+1,\ldots,i+n\}.	\label{eq:APZF_10}
							\end{equation}
				\end{lemma}
					\begin{proof}
							The proof of Lemma~\ref{lemma_fundamental_APZF} is given in Appendix~\ref{app:apzf} along with the derivation of other important properties of AP-ZF precoding.
					\end{proof}

		%%%%%%%%%%%%%%%%%%%%%%%%%%%%%%%%%%%%%%%%%%%%%%%%%
		%%%%%%%%%%%%%%%%%%%%%%%%%%%%%%%%%%%%%%%%%%%%%%%%%
		%%%%%%%%%%%%%%%%%%%%%%%%%%%%%%%%%%%%%%%%%%%%%%%%%
			\subsection{Received Signals}\label{se:RX_signal}
		%%%%%%%%%%%%%%%%%%%%%%%%%%%%%%%%%%%%%%%%%%%%%%%%%
		%%%%%%%%%%%%%%%%%%%%%%%%%%%%%%%%%%%%%%%%%%%%%%%%% 
			The signal received at RX~$i$ is given by
				\begin{align} 
						\!y_i\! & =\!\underbrace{\bH_{i,1}s_0}_{\doteq P} \;+\; \underbrace{\bh_i^{\He}\bT_{i}^{\APZF}\bm{s}_{i}}_{\doteq P^{\alpha^{(n)}}} \;+\; \underbrace{\bh_i^{\He} \sum_{\mathclap{\ell\in \Ic^{\APZF}_i}} \bT_{\ell}^{\APZF}\bm{s}_{\ell}}_{\doteq P^{\alpha^{(n)}}} \;+\; \underbrace{\bh_i^{\He}  \sum_{\mathclap{\ell\in \Uc\setminus\Ic^{\APZF}_i}} \bT_{\ell}^{\APZF}\bm{s}_{\ell}}_{\doteq P^0},	\label{eq:APZF_11}
				\end{align}
			where the noise term has been neglected for clarity, and where $\Ic^{\APZF}_i$ is defined (omitting the modulo operation) as 
				\begin{align}
						\Ic^{\APZF}_i\triangleq \left \{i+1,\ldots,i+k-n-1\right\}.		\label{eq:APZF_12} 
				\end{align}
			Intuitively, the set~$\Ic^{\APZF}_i$ contains the interfering signals that have not been attenuated towards RX~$i$. The last term in \eqref{eq:APZF_11} scales as $P^{0}$ due to the attenuation produced by the AP-ZF precoding, as shown in Lemma~\ref{lemma_fundamental_APZF}. 
			In Fig.~\ref{fig:ISIT2016_opt_scheme}, we illustrate the received signal at every RX for $k=3$ RXs and $n=1$~Active TX. We can see the improvement with respect to Fig.~\ref{fig:ISIT2016_first_scheme} since the number of significant interference terms is reduced by half thanks to the AP-ZF precoding. 

			%%%%%%%%%%%%%%%%%%%%%%%%%%%%%%%%%%%%%%%%%%%%%%%%%%%%%
				\begin{figure*}[t]
					\centering
						\begin{overpic}[width=0.99\textwidth]{./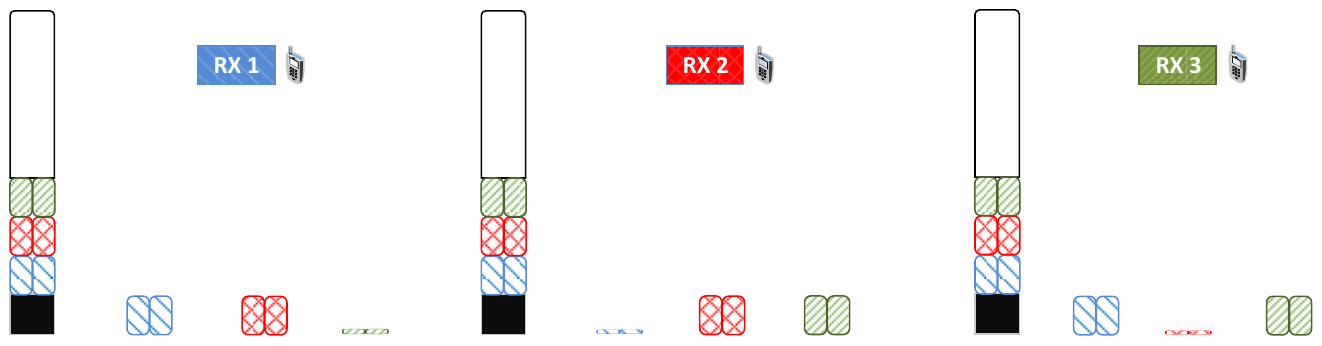}					
								{\put(19,4.5){\parbox{\linewidth}{$\scriptstyle b_1$}}}
								{\put(62,4.5){\parbox{\linewidth}{$\scriptstyle c_1$}}}
								{\put(82,4.5){\parbox{\linewidth}{$\scriptstyle a_1$}}}
													
								{\put(4.5,4.5){\parbox{\linewidth}{$\scriptstyle a_1$}}}
								{\put(4.5,7.5){\parbox{\linewidth}{$\scriptstyle b_1$}}}
								{\put(4.5,10.75){\parbox{\linewidth}{$\scriptstyle c_1$}}}
																	
								{\put(33,12.50){\parbox{\linewidth}{$\scriptstyle s_0\begin{cases}~\\~\vspace{7ex}\\~\\~\\\end{cases}$}}}										
								{\put(0,-2){\parbox{\linewidth}{\resizebox{!}{0.95\height}{$\;\bH_{1,1}s_0  +\, \bh^{\He}_1\bs_1\  + \ \bh^{\He}_1\bs_2\ \color{black!40!white}{\ + \bh^{\He}_1\bs_3}$}}}}
								{\put(35,-2){\parbox{\linewidth}{\resizebox{!}{0.95\height}{$\bH_{2,1}s_0 {\color{black!40!white}{\ \, +\ \,  \bh^{\He}_2\bs_1}} \ + \bh^{\He}_2\bs_2 \; +\;  \bh^{\He}_2\bs_3$}}}}
								{\put(73,-2){\parbox{\linewidth}{\resizebox{!}{0.95\height}{$\!\bH_{3,1}s_0 + \bh^{\He}_3\bs_1 {\color{black!40!white}{\ +\bh^{\He}_3\bs_2}}\, + \bh^{\He}_3\bs_3$}}}}		
								
								{\put(49.75,24.25){\parbox{\linewidth}{\resizebox{!}{.8\height}{$\Pb$\ }}}}				
								{\put(40,24.55){\parbox{\linewidth}{\resizebox{!}{.8\height}{-----------------}}}}				
								{\put(64.5,3.65){\parbox{\linewidth}{\resizebox{!}{.8\height}{---$\Pb^{\alpha^{(1)}}$\!\!\!\!---}}}}
								{\put(64.5,1.65){\parbox{\linewidth}{\resizebox{!}{1.6\height}{$\uparrow$}}}}				
								{\put(69,1.65){\parbox{\linewidth}{\resizebox{!}{1.6\height}{$\uparrow$}}}}								
																				
								\linethickness{0.5mm} \color{blue!85!white}%
										\put(0,0){\polygon(-1.5,-6)(-1.5,26)(30,26)(30,-6)}
								\linethickness{0.5mm} \color{red!85!black}%
										\put(0,0){\polygon(30.5,-6)(30.5,26)(69,26)(69,-6)}
								\linethickness{0.5mm} \color{green!75!black!90!blue}%
										\put(0,0){\polygon(69.5,-6)(69.5,26)(97.5,26)(97.5,-6)}		
																													
						\end{overpic}	
						~\\ \vspace{7ex}  \caption{Illustration of the received signals for the Weak-CSIT regime in the case of $k=3$ Transmitting TXs and $n=1$ Active TX. Due to the AP-ZF precoding, the interference generated is reduced, and thus extra new information can be sent through $s_0$ (white).} \label{fig:ISIT2016_opt_scheme}
				\end{figure*}
			%%%%%%%%%%%%%%%%%%%%%%%%%%%%%%%%%%%%%%%%%%%%%%%%%%%%%
			
		%%%%%%%%%%%%%%%%%%%%%%%%%%%%%%%%%%%%%%%%%%%%%%%%%
		%%%%%%%%%%%%%%%%%%%%%%%%%%%%%%%%%%%%%%%%%%%%%%%%%
			\subsection{Decoding}\label{se:weak_CSIT:quantization}
		%%%%%%%%%%%%%%%%%%%%%%%%%%%%%%%%%%%%%%%%%%%%%%%%%
		%%%%%%%%%%%%%%%%%%%%%%%%%%%%%%%%%%%%%%%%%%%%%%%%%
			TX~$1$ uses its local CSIT~$\hat{\bH}^{(1)}$ to estimate the interference terms ${\bh}_i^{\He} \bT^{\APZF}_{\ell}\bm{s}_{\ell}$, $\forall \ell \in \Ic^{\APZF}_i$. 
			Each interference term scales in $P^{\alpha^{(n)}}$. Therefore, each term can be quantized with $\alpha^{(n)}\log_2(P)$~bits such that the quantization noise lies at the noise floor~\cite{Cover2006}. 
			Considering all users, this means that $k(k-n-1)\alpha^{(n)}\log_2(P)$~interference bits have to be transmitted. In order to do so, we use the broadcast data symbol~$s_0$. 
			If the quantity of information to be retransmitted exceeds the data rate of $s_0$, additional broadcast resources will need to be found to enable the successful decoding of the private data symbol. 
			This is the essence of the linear optimization in Theorem~\ref{theo:theorem_achievable} and will be discussed further in Section~\ref{se:Full_scheme}. 
			We assume here that all the interference terms can be transmitted using the common data symbol~$s_0$, and we will verify that it is indeed possible for a given RX~$i$ to decode its $(k-n)\alpha^{(n)}\log_2(P)$~intended bits. 

			By using successive decoding, the data symbol~$s_0$ of rate of $(1-\alpha^{(n)})\log_2(P)$~bits can be decoded with a vanishing probability of error as its effective SNR can be seen in \eqref{eq:APZF_11} to scale as~$P^{1-\alpha^{(n)}}$. Upon decoding~$s_0$, we obtain the estimated interferences $(\hat{\bh}^{(1)}_i)^{\He} \bT^{\APZF(1)}_\ell\bm{s}_\ell$, for $ \ell\in\Ic^{\APZF}_i$, up to the quantization noise at the noise floor. It then holds that
				\begin{align}
						(\hat{\bh}^{(1)}_{i})^{\He} \bT^{\APZF(1)}_{\ell}\bs_{\ell}
								&   = \LB \bh_i^{\He} + \Pb^{-\alpha^{(1)}}(\bm{\delta}^{(1)}_i)^{\He}\RB \bT^{\APZF(1)}_{\ell}\bs_{\ell}\\
								%&\hspace{5ex}		=\bh_i^{\He}\bT^{\APZF(1)}_{\ell}\bs_{\ell}+\underbrace{P^{-\alpha^{(1)}}(\bm{\delta}^{(1)}_i)^{\He}\bT^{\APZF(1)}_{\ell}\bs_{\ell}}_{\doteq P^{0}} \label{eq:APZF_13a}\\
								&		= \bh_i^{\He}\bT^{\APZF(1)}_{\ell}\bs_{\ell} + {\Pb^{-\alpha^{(1)}}(\bm{\delta}^{(1)}_i)^{\He}\bT^{\APZF(1)}_{\ell}\bs_{\ell}} \label{eq:APZF_13a}\\
							  %&\hspace{5ex}	  \stackrel{(a)}{=}\bh_i^{\He}\bT^{\APZF}_{\ell}\bs_{\ell}+\underbrace{\bh_i^{\He}\LB\bT^{\APZF(1)}_{\ell}-\bT^{\APZF}_{\ell} \RB\bs_{\ell}}_{\doteq P^{0}}\label{eq:APZF_13b}						
							  &   = \bh_i^{\He}\bT^{\APZF}_{\ell}\bs_{\ell}+{\bh_i^{\He}\LB\bT^{\APZF(1)}_{\ell}-\bT^{\APZF}_{\ell} \RB\bs_{\ell}},\label{eq:APZF_13b}						
				\end{align}
			where \eqref{eq:APZF_13b} is obtained after omitting the second term of \eqref{eq:APZF_13a}, since its power scales as $P^{-\alpha^{(1)}}P^{\alpha^{(1)}} = P^{0}$, i.e., it lies on the noise floor.  It holds that, $\forall \ell\in \{1,\ldots,k\},\forall j\in \{1,\ldots,n\}$, the AP-ZF precoding satisfies the following property (see the proof in Appendix~\ref{app:apzf}):
				\begin{align}
						\|\bT^{\APZF(j)}_{\ell}-\bT^{\APZF}_{\ell}\|^2_{\Fro}\dotleq P^{-\alpha^{(j)}}.		\label{eq:APZF_14a}
				\end{align}
			It follows from \eqref{eq:APZF_14a} that 
				\begin{align}
						{\bh_i^{\He}\LB\bT^{\APZF(1)}_{\ell}-\bT^{\APZF}_{\ell} \RB\bs_{\ell}}\doteq P^0.				\label{eq:APZF_13q}						
				\end{align}
			Once we have subtracted the quantized interference terms, the remaining signal at RX~$i$ up to the noise floor is  
				\begin{align}
						y_i= \bh_i^{\He}\bT^{\APZF}_i\bs_i.					\label{eq:APZF_14b}
				\end{align}
			The key point of our approach is that RX~$i$ also receives  through the broadcast data symbol the interference created by its own intended symbols at the \emph{other RXs}, i.e., the estimated interference terms $(\hat{\bh}_{\ell}^{(1)})^{\He}\bT_i^{\APZF(1)}\bs_i,\ \forall \ell$ such that $i\in \Ic^{\APZF}_\ell$ (note the swap of indexes $i$-$\ell$ with respect to previous expressions). Each of those terms is an independent linear combination of the symbols $\bs_i$, and thus RX~$i$ can form a virtual received vector~$\by_i^{\mathrm{v}}\in\mathbb{C}^{k-n}$ equal to
				\begin{align}
						\by_i^{\mathrm{v}}	\triangleq 
								\begin{bmatrix}
											\bh_i^{\He}\\
											(\hat{\bh}_{{i-1}}^{(1)})^{\He}\\
											\vdots\\
											(\hat{\bh}_{{i-(k-n-1)}}^{(1)})^{\He}
								\end{bmatrix}\bT^{\APZF}_i\bs_i.						\label{eq:APZF_15}
				\end{align}
			Each component of $\by_y^{\mathrm{v}}$ has a SINR scaling in $P^{\alpha^{(n)}}$ and the AP-ZF precoder is of rank~$k-n$ (See Lemma~\ref{lemma_APZF_rank} in Appendix~\ref{app:apzf}) such that RX~$i$ can decode its desired $k-n$~data symbols, each with the rate of $\alpha^{(n)}\log_2(P)$~bits.
				\begin{remark}
						The rank in \eqref{eq:APZF_15} is ensured by the use of the \emph{Passive TXs}. Hence, it is interesting to observe how uninformed TXs prove to be instrumental in the proposed scheme. \qed
				\end{remark}

		%%%%%%%%%%%%%%%%%%%%%%%%%%%%%%%%%%%%%%%%%%%%%%%%%
		%%%%%%%%%%%%%%%%%%%%%%%%%%%%%%%%%%%%%%%%%%%%%%%%%
			\subsection{DoF Analysis}\label{se:DoF}
		%%%%%%%%%%%%%%%%%%%%%%%%%%%%%%%%%%%%%%%%%%%%%%%%%
		%%%%%%%%%%%%%%%%%%%%%%%%%%%%%%%%%%%%%%%%%%%%%%%%%
			The transmission of the data symbols intended to RX~$i$ creates $|\Ic^{\APZF}_i| = k-n-1$ interference terms, which gives in total~$k(k-n-1)\alpha^{(n)}\log_2(P)$~bits that need to be retransmitted. Consequently, we define $\DoF^{\text{Interf(\mhyp)}}_{n,k}$ as the DoF \emph{consumed} in order to transmit these interference terms and which is  given by
				\begin{align}
						\DoF^{\text{Interf}(\mhyp)}_{n,k}\triangleq k(k-n-1)\alpha^{(n)}. \label{eq:dof_interf_n}
				\end{align}  
			In contrast, data symbol $s_0$ carries $(1-\alpha^{(n)})\log_2(P)$~bits, i.e., the DoF of the common data symbol~$\DoF_{n,k}^{\BC}$ is given by
				\begin{align}
						\DoF^{\BC}_{n,k} \triangleq 1-\alpha^{(n)}.					\label{eq:APZF_dof_broadcast}
				\end{align}			
			Finally, considering the $(k-n)\alpha^{(n)}\log_2(P)$ private bits for all $k$~users leads to the private DoF denoted by~$\DoF^{\text{Priv}}_{n,k}$ and defined as
				\begin{align}
						\DoF^{\text{Priv}}_{n,k} \triangleq k(k-n)\alpha^{(n)},				\label{eq:dof_priv_n}
				\end{align}
			which is the DoF obtained from the private data symbols if all the interference is canceled. 
			Putting \eqref{eq:dof_interf_n}, \eqref{eq:APZF_dof_broadcast},  and \eqref{eq:dof_priv_n} together, the total DoF is  
				\begin{align}
						\DoF_{n,k} = \DoF^{\text{Priv}}_{n,k} + \DoF^{\text{\BC}}_{n,k} - \DoF^{\text{Interf}(\mhyp)}_{n,k}
				\end{align} 
			at the condition that $\DoF^{\text{\BC}}_{n,k} - \DoF^{\text{Interf}(\mhyp)}_{n,k} \geq 0$, i.e., that all interference terms could have been retransmitted. If this condition does not hold, the retransmission of the interference is managed through the time-sharing optimization of the different modes as discussed in Section~\ref{se:Full_scheme}. Conversely, the optimal result of Theorem~\ref{theo:weak_CSIT} is achieved if the condition  $\DoF^{\text{\BC}}_{n,k} - \DoF^{\text{Interf}(\mhyp)}_{n,k} \geq 0$ is true for the Transmission Mode $(m,K)$, where $m$ is the number of TXs with $\alpha\expj = \alpha\expo$. By solving the inequality, the maximum value of $\alpha^{\Weak}_m$ is obtained as 
			
				\begin{align}%$
						\alpha^{\Weak}_m\triangleq \frac{1}{1+K(K-m-1)}. %weak_CSIT
				\end{align}%$
							
	%%%%%%%%%%%%%%%%%%%%%%%%%%%%%%%%%%%%%%%%%%%%%%%%%%%%%%%%%%%%%%%%%%%%%%%%%%%%%%%%%
	%%%%%%%%%%%%%%%%%%%%%%%%%%%%%%%%%%%%%%%%%%%%%%%%%%%%%%%%%%%%%%%%%%%%%%%%%%%%%%%%%
	%%%%%%%%%%%%%%%%%%%%%%%%%%%%%%%%%%%%%%%%%%%%%%%%%%%%%%%%%%%%%%%%%%%%%%%%%%%%%%%%%
	%%%%%%%%%%%%%%%%%%%%%%%%%%%%%%%%%%%%%%%%%%%%%%%%%%%%%%%%%%%%%%%%%%%%%%%%%%%%%%%%%
		\section{Proof of Theorem~\ref{theo:theorem_achievable}}\label{se:Full_scheme} %~\ref{theo:theorem_achievable}: Transmission Optimization}\label{se:Full_scheme}
	%%%%%%%%%%%%%%%%%%%%%%%%%%%%%%%%%%%%%%%%%%%%%%%%%%%%%%%%%%%%%%%%%%%%%%%%%%%%%%%%%
	%%%%%%%%%%%%%%%%%%%%%%%%%%%%%%%%%%%%%%%%%%%%%%%%%%%%%%%%%%%%%%%%%%%%%%%%%%%%%%%%%
	%%%%%%%%%%%%%%%%%%%%%%%%%%%%%%%%%%%%%%%%%%%%%%%%%%%%%%%%%%%%%%%%%%%%%%%%%%%%%%%%%
	%%%%%%%%%%%%%%%%%%%%%%%%%%%%%%%%%%%%%%%%%%%%%%%%%%%%%%%%%%%%%%%%%%%%%%%%%%%%%%%%% 
		For a particular Transmission Mode $(n,k)$, let us start by defining $d_{n,k}$ as the difference between the DoF available in the broadcast symbol $s_0$ in~\eqref{eq:APZF_dof_broadcast} and the DoF consumed by the interference to be retransmitted in~\eqref{eq:dof_interf_n}, i.e.,  % (see Section~\ref{se:APZF} for more details)
					\begin{align}
							d_{n,k} & \triangleq \DoF^{\BC}_{n,k} - \DoF^{\text{Interf}(\mhyp)}_{n,k}\\
										&= 1 - \alpha^{(n)} - k(k-n-1)\alpha^{(n)}.
					\end{align}
		It is not required that each Transmission Mode leads to the transmission of all interference terms, which would imply that every $d_{n,k}$ satisfies $d_{n,k}>0$. 
		In fact, it is only necessary that all interference terms were successfully transmitted at the end of the time sharing between all Transmission Modes. Mathematically, this interference retransmission constraint is written as
			\begin{align}
					\sum_{k=2}^{K}\sum_{n=1}^{k-1}\gamma_{n,k} d_{n,k} \geq 0, %% [Antonio] I prefer to explain in a remark why the last mode is useless. Actually, in some cases, some solutions couls even use it...(when it does not decreases the DoF)
			\end{align} 
		where $\gamma_{n,k}$ is the time-sharing variable, such that $\gamma_{n,k}\geq 0$ and $\sum_{k=2}^{K}\sum_{n=1}^{k-1}\gamma_{n,k}=1$. 
		By taking into consideration that constraint, the sum DoF can then be rewritten as
			\begin{align}
					\sum_{k=2}^{K}\sum_{n=1}^{k-1}\gamma_{n,k}\DoF_{n,k} &= \sum_{k=2}^{K}\sum_{n=1}^{k-1}\DoF^{\text{Priv}}_{n,k}+\DoF^{\text{\BC}}_{n,k} - \DoF^{\text{Interf}(\mhyp)}_{n,k}\\
								&= \sum_{k=2}^{K}\sum_{n=1}^{k-1}\gamma_{n,k} \LB 1 + (k-1)\alpha^{(n)}\RB.			\label{eq:APZF_dof_total}
			\end{align} 
		The optimal time allocated to each Transmission Mode is obtained by solving the following optimization problem, which concludes the proof:
			\begin{align}
					&\underset{\substack{\gamma_{n,k}}}{\mathrm{maximize}}\quad \sum_{k=2}^{K}\sum_{n=1}^{k-1}\gamma_{n,k} \Big( 1+(k-1)\alpha^{(n)}\Big) \label{eq:Objective_primo_proof}\\
					&\text{subject to\quad}\sum_{k=2}^{K}\sum_{n=1}^{k-1}\gamma_{n,k}=1, \hspace{2ex}\gamma_{n,k}\geq 0, \label{eq:time_sharing_proof}\\
					&\phantom{subject to\quad}\sum_{k=2}^{K}\sum_{n=1}^{k-1}d_{n,k}\gamma_{n,k}   \geq 0. \label{eq:interference_retransmission_condition_proof} %\LB 1 - \alpha^{(n)} - k(k-n-1)\alpha^{(n)}\RB
			\end{align}   
		
		It is important to optimize over both the number of Transmitting TXs and the number of Active TXs. 
		As shown in Fig.~\ref{fig:dof_casesTX}, using $K$ Transmitting TXs can be detrimental depending on the CSI allocation at each TX. 
		The number of Active TXs~($n$) controls how many RXs can have its received interference attenuated and up to which level that interference is reduced, while the number of Transmitting TXs~($k$) controls how many users are served.
		Furthermore, the difference $(k-n)$ impacts on how overloaded the transmission is. 
		For example, setting~$k = n+1$ implies a single stream per RX, and thus no overload. In this case, only $n+1$ RXs are served simultaneously, and no interference is generated. 
		Increasing $k$ while fixing $n$ would increase the number of RXs served and the interference generated, which can be beneficial thanks to the fact that the retransmitted interference is useful for two~RXs.

		\section{Conclusion}
	%%%%%%%%%%%%%%%%%%%%%%%%%%%%%%%%%%%%%%%%%%%%%%%%%%%%%%%%%%%%%%%%%%%%%%%%%%%%%%%%%
	%%%%%%%%%%%%%%%%%%%%%%%%%%%%%%%%%%%%%%%%%%%%%%%%%%%%%%%%%%%%%%%%%%%%%%%%%%%%%%%%%
	%%%%%%%%%%%%%%%%%%%%%%%%%%%%%%%%%%%%%%%%%%%%%%%%%%%%%%%%%%%%%%%%%%%%%%%%%%%%%%%%%
	%%%%%%%%%%%%%%%%%%%%%%%%%%%%%%%%%%%%%%%%%%%%%%%%%%%%%%%%%%%%%%%%%%%%%%%%%%%%%%%%%
		We have described a novel D-CSIT robust transmission scheme that significantly improves the achieved DoF with respect to state-of-the-art precoding approaches when faced with distributed CSIT. 
		We have first derived an upper bound coined as the \emph{centralized upper bound} and consisting in a genie-aided setting where all the channel estimates are made available at all TXs. 
		We have then shown how this genie-aided upper bound was achieved by the proposed transmission scheme over a range of CSIT configurations, the so-called ``Weak-CSIT'' regime. 
		Surprisingly, this upper bound can even be achieved with the CSI being handed at a single TX, i.e., with all other TXs not having access to CSIT. 
		The proposed robust precoding scheme relies on new methods, such as the transmission of the estimated interference from a single TX before the interference is generated, as well as a modified ZF precoding allowing for an overloaded transmission. 
		These new methods have a strong potential in other wireless configurations with TXs having access to different qualities of CSI. Converting these new innovative transmission schemes into practical transmission schemes in realistic environments is an interesting and ongoing research direction.  
		Such a robust precoding scheme could yield important gains in practice and make advanced precoding schemes more practical. 
		Deriving tighter distributed upper bounds is also an interesting and challenging research problem.

	%%%%%%%%%%%%%%%%%%%%%%%%%%%%%%%%%%%%%%
	%%%%%%%%%%%%%%%%%%%%%%%%%%%%%%%%%%%%%%%%%%%%%%%%%%%%%%%%%%%%%%%%%%%%%%%%%%%%%%%%%
	%%%%%%%%%%%%%%%%%%%%%%%%%%%%%%%%%%%%%%%%%%%%%%%%%%%%%%%%%%%%%%%%%%%%%%%%%%%%%%%%%
	%%%%%%%%%%%%%%%%%%%%%%%%%%%%%%%%%%%%%%%%%%%%%%%%%%%%%%%%%%%%%%%%%%%%%%%%%%%%%%%%%%%%%%%%%%%%%%%%%%%%%%%%%%%%%%%%%%%%%
	%%%%%%%%%%%%%%%%%%%%%%%%%%%%%%%%%%%%%%%%%%%%%%%%%%%%%%%%%%%%%%%%%%%%%%%%%%%%%%%%%%%%%%%%%%%%%%%%%%%%%%%%%%%%%%%%%%%%%%%%%%%%%%%%%%%%%%%%%%%
	%%%%%%%%%%%%%%%%%%%%%%%%%%%%%%%%%%%%%%%%%%%%%%%%%%%%%%%%%%%%%%%%%%%%%%%%%%%%%%%%%%%%%%%%%%%%%%%%%%%%%%%%%%%%%%%%%%%%%
	%%%%%%%%%%%%%%%%%%%%%%%%%%%%%%%%%%%%%%%%%%%%%%%%%%%%%%%%%%%%%%%%%%%%%%%%%%%%%%%%%
	%%%%%%%%%%%%%%%%%%%%%%%%%%%%%%%%%%%%%%%%%%%%%%%%%%%%%%%%%%%%%%%%%%%%%%%%%%%%%%%%%
	%%%%%%%%%%%%%%%%%%%%%%%%%%%%%%%%%%%%%%

	%\begin{appendices}
	\appendices

	%%%%%%%%%%%%%%%%%%%%%%%%%%%%%%%%%%%%%%%%%%%%%%%%%%%%%%%%%%%%%%%%%%%%%%%%%%%%%%%%%
	%%%%%%%%%%%%%%%%%%%%%%%%%%%%%%%%%%%%%%%%%%%%%%%%%%%%%%%%%%%%%%%%%%%%%%%%%%%%%%%%%
	%%%%%%%%%%%%%%%%%%%%%%%%%%%%%%%%%%%%%%%%%%%%%%%%%%%%%%%%%%%%%%%%%%%%%%%%%%%%%%%%%
	%%%%%%%%%%%%%%%%%%%%%%%%%%%%%%%%%%%%%%%%%%%%%%%%%%%%%%%%%%%%%%%%%%%%%%%%%%%%%%%%%
		\section{AP-ZF properties}\label{app:apzf}
	%%%%%%%%%%%%%%%%%%%%%%%%%%%%%%%%%%%%%%%%%%%%%%%%%%%%%%%%%%%%%%%%%%%%%%%%%%%%%%%%%
	%%%%%%%%%%%%%%%%%%%%%%%%%%%%%%%%%%%%%%%%%%%%%%%%%%%%%%%%%%%%%%%%%%%%%%%%%%%%%%%%%
	%%%%%%%%%%%%%%%%%%%%%%%%%%%%%%%%%%%%%%%%%%%%%%%%%%%%%%%%%%%%%%%%%%%%%%%%%%%%%%%%%
	%%%%%%%%%%%%%%%%%%%%%%%%%%%%%%%%%%%%%%%%%%%%%%%%%%%%%%%%%%%%%%%%%%%%%%%%%%%%%%%%%
		We start by showing some simple but important properties of the AP-ZF precoder. 
		We consider in the following  the precoder for a specific RX~$i$'s data symbols. 
		From symmetry, the precoder satisfies the same properties for any RX, such that we omit hereinafter the RX's sub-index $i$ for clarity. 
		Let us recall that the AP-ZF precoder aims to cancel the interference out at only a subset of $n$ RXs.  
		
			%%%%%%%%%%%%%%%%%%%%%%%%%%%%%%%%%%%%%%%%%%%%%%%%%%%%%%%%%%%%%%%%%%%%%%%%%%%%%%%%%
			\begin{lemma}\label{lemma_APZF_regularization}
					Let $\bH\in\Cb^{n\times K}$ denote the channel matrix  towards the $n$~RXs whose received interference is reduced. With perfect channel knowledge at all Active TXs, the AP-ZF precoder with $n$ Active TXs and $K-n$ Passive TXs satisfies  
					\begin{equation}
							\bH\bT^{\APZF\star}\xrightarrow[]{P\rightarrow \infty}\bm{0}_{n\times (K-n)},\label{eq:APZF_7_app}
					\end{equation}
					where  $\bT^{\APZF \star}$ denotes the AP-ZF precoder  according to the description in Section~\ref{subse:apzf_expl} but based on perfect CSIT, and it is given as
					\begin{equation}
							\bT^{\APZF\star}\triangleq\lambda^{\APZF}\begin{bmatrix}\bT^{\mathrm{A}\star}\\\bT^{\mathrm{P}}\end{bmatrix}. 
					\end{equation} 
			\end{lemma}  
			\begin{proof}
					Using the well known Resolvent identity \cite[Lemma~$6.1$]{Couillet2011}, we can write that
					\begin{align}
							&\LB \bH_{\mathrm{A}}^{\He}\bH_{\mathrm{A}} +\frac{1}{P}\bI_{n}\RB^{-1}\!\!\!\!\!\! -\LB \bH_{\mathrm{A}}^{\He}\bH_{\mathrm{A}}\RB^{-1}	=-\LB \bH_{\mathrm{A}}^{\He}\bH_{\mathrm{A}}\RB^{-1} \frac{1}{P}\bI_{n} \LB \bH_{\mathrm{A}}^{\He}\bH_{\mathrm{A}} +\frac{1}{P}\bI_{n}\RB^{-1}\!\!\!. \label{eq:APZF_8_app}
					\end{align}
					We can then compute the leaked interference as
					\begin{align}
							\bH \bT^{\APZF\star}&=\ \lambda^{\APZF}\bH_{\mathrm{A}} \bT^{\mathrm{A}\star}+\lambda^{\APZF}\bH_{\mathrm{P}} \bT^{\mathrm{P}}\\
							&\myoverset{(a)}{=}\lambda^{\APZF}\bH_\mathrm{A}  \LB \bH_{\mathrm{A}}^{\He}\bH_{\mathrm{A}}\RB^{-1}   \frac{1}{P}\bI_{n}  \LB \bH_{\mathrm{A}}^{\He}\bH_{\mathrm{A}} +\frac{1}{P}\bI_{n}\RB^{-1}\bH_{\mathrm{A}}^{\He} \bH_{\mathrm{P}}\bT^{\mathrm{P}},	\label{eq:APZF_9_app}
					\end{align}
					where equality $(a)$ follows from inserting \eqref{eq:APZF_8_app} inside the AP-ZF precoder and simplifying. 
					By letting the available power $P$ tend to infinity, the leaked interference tends to zero.
			\end{proof}
			%%%%%%%%%%%%%%%%%%%%%%%%%%%%%%%%%%%%%%%%%%%%%%%%%%%%%%%%%%%%%%%%%%%%%%%%%%%%%%%%% 

		%%%%%%%%%%%%%%%%%%%%%%%%%%%%%%%%%%%%%%%%%%%%%%%%%%%%%%%%%%%%%%%%%%%%%%%%%%%%%%%%%
			\begin{lemma}\label{lemma_APZF_rank}
					The AP-ZF precoder with $n$ Active TXs and $K-n$ Passive TXs is of rank $K-n$.
			\end{lemma}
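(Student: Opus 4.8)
The plan is to read the rank off directly from the block structure of the effective precoder in \eqref{eq:APZF_4}. Write $\bT^{\APZF}\in\mathbb{C}^{K\times(K-n)}$ and split its rows into the top $n$ rows (the active part, whose $i$-th row is $\lambda^{\APZF}\be_i^{\He}\bT^{\mathrm{A}(i)}$) and the bottom $K-n$ rows, which by construction equal $\lambda^{\APZF}\bT^{\mathrm{P}}$. The first, trivial, observation is that $\bT^{\APZF}$ has only $K-n$ columns, so $\rank(\bT^{\APZF})\le K-n$.

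For the matching lower bound I would show that the $K-n$ columns of $\bT^{\APZF}$ are linearly independent by restricting attention to the bottom $K-n$ rows. That submatrix is $\lambda^{\APZF}\bT^{\mathrm{P}}$, where $\bT^{\mathrm{P}}$ is by assumption an arbitrary but \emph{full-rank} $(K-n)\times(K-n)$ matrix and $\lambda^{\APZF}$ is a strictly positive scalar; hence $\lambda^{\APZF}\bT^{\mathrm{P}}$ is invertible. Consequently, if some linear combination of the columns of $\bT^{\APZF}$ were zero, the same combination of the columns of $\lambda^{\APZF}\bT^{\mathrm{P}}$ would be zero, forcing all coefficients to vanish. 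Therefore $\rank(\bT^{\APZF})\ge K-n$, and combining the two bounds yields $\rank(\bT^{\APZF})=K-n$.

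There is essentially no obstacle in this argument; the only point needing a word of justification is that $\lambda^{\APZF}$ is a well-defined nonzero constant. Positivity is immediate because the Frobenius norm inside the expectation in \eqref{eq:APZF_5} is bounded below by $\|\bT^{\mathrm{P}}\|_{\Fro}>0$, and the regularization term $\tfrac1P\bI_n$ keeps $(\bH_{\mathrm{A}}^{\He}\bH_{\mathrm{A}}+\tfrac1P\bI_n)^{-1}$ bounded in operator norm by $P$, so the expectation is finite under the bounded-covariance assumption on the channel; hence $\lambda^{\APZF}\in(0,\infty)$. Finally, note that the argument touches only the passive block and is therefore insensitive to the accuracy of the estimates $\hat{\bH}^{(j)}$: it applies verbatim to the perfect-CSIT precoder $\bT^{\APZF\star}$ of Lemma~\ref{lemma_APZF_regularization} as well, whose bottom block is likewise $\lambda^{\APZF}\bT^{\mathrm{P}}$.
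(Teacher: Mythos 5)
Your proof is correct and follows essentially the same route as the paper: both arguments read the rank off the passive block, noting that the bottom $(K-n)\times(K-n)$ submatrix $\lambda^{\APZF}\bT^{\mathrm{P}}$ is full rank while the matrix has only $K-n$ columns. Your added justification that $\lambda^{\APZF}$ is a finite, strictly positive constant is a welcome extra degree of care but does not change the substance.
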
 
			\begin{proof} 
					The passive precoder was chosen such that~$\bT^{\mathrm{P}}$ is full rank, i.e., of rank $K-n$. For any Active TX~$j$, the precoder~$\bT^{\mathrm{A}(j)}$ is a linear function of~$\bT^{\mathrm{P}}$, such that the effective AP-ZF precoder~$\bT^{\APZF}$ resulting from distributed precoding is exactly of rank $K-n$.
			\end{proof}
		%%%%%%%%%%%%%%%%%%%%%%%%%%%%%%%%%%%%%%%%%%%%%%%%%%%%%%%%%%%%%%%%%%%%%%%%%%%%%%%%%

		%%%%%%%%%%%%%%%%%%%%%%%%%%%%%%%%%%%%%%%%%%%%%%%%%%%%%%%%%%%%%%%%%%%%%%%%%%%%%%%%%
		\subsection{Proof of Lemma~\ref{lemma_fundamental_APZF}}
		%%%%%%%%%%%%%%%%%%%%%%%%%%%%%%%%%%%%%%%%%%%%%%%%%%%%%%%%%%%%%%%%%%%%%%%%%%%%%%%%%
		
			We follow a similar approach as in \cite{dekerret2012_TIT}. 
			We can use once more the resolvent identity \cite[Lemma~$6.1$]{Couillet2011} to approximate the matrix inverse and show that, for any $j \leq n$, 
			\begin{equation}
					\left\| \bT^{\APZF(j)} - \bT^{\APZF\star} \right\|_{\Fro}^2\dotleq P^{-\alpha^{(j)}}.	\label{eq:APZF_11app}
			\end{equation}
			 It then follows that
			\begin{align}
					\left\|\bH \bT^{\APZF} \right\|_{\Fro}^2 &\dotleq \left\|\bH\LB \bT^{\APZF} - \bT^{\APZF\star} \RB\right\|_{\Fro}^2\label{eq:APZF_12_1}\\
					&\dotleq \left\|\bH\right\|_{\Fro}^2 \left\| \bT^{\APZF} - \bT^{\APZF\star} \right\|_{\Fro}^2\\
					&\dotleq  \left\|\bH\right\|_{\Fro}^2   \sum_{j=1}^{n} \big\| \bT^{\APZF(j)} - \bT^{\APZF\star} \big\|_{\Fro}^2\\
					&\dotleq P^{-\min_{j\in \{1,\dots,n\}}\alpha^{(j)}},	\label{eq:APZF_12_4}
			\end{align} 
			where \eqref{eq:APZF_12_1} comes from Lemma~\ref{lemma_APZF_regularization} and \eqref{eq:APZF_12_4} follows from \eqref{eq:APZF_11app}.
				
			\begin{remark}
					The interference attenuation of AP-ZF precoding is only limited by the worst CSI accuracy among the Active TXs, and it does not depend on the CSI accuracy at the Passive TXs.\qed
			\end{remark}
			\FloatBarrier

	%%%%%%%%%%%%%%%%%%%%%%%%%%%%%%%%%%%%%%%%%%%%%%%%%%%%%%%%%%%%%%%%%%%%%%%%%%%%%%%%%
	%%%%%%%%%%%%%%%%%%%%%%%%%%%%%%%%%%%%%%%%%%%%%%%%%%%%%%%%%%%%%%%%%%%%%%%%%%%%%%%%%
	%%%%%%%%%%%%%%%%%%%%%%%%%%%%%%%%%%%%%%%%%%%%%%%%%%%%%%%%%%%%%%%%%%%%%%%%%%%%%%%%%
		\section{Proof of  Lemma~\ref{lem:density_bound}} \label{app:bounded_estimation_proof}
	%%%%%%%%%%%%%%%%%%%%%%%%%%%%%%%%%%%%%%%%%%%%%%%%%%%%%%%%%%%%%%%%%%%%%%%%%%%%%%%%%
	%%%%%%%%%%%%%%%%%%%%%%%%%%%%%%%%%%%%%%%%%%%%%%%%%%%%%%%%%%%%%%%%%%%%%%%%%%%%%%%%%
	%%%%%%%%%%%%%%%%%%%%%%%%%%%%%%%%%%%%%%%%%%%%%%%%%%%%%%%%%%%%%%%%%%%%%%%%%%%%%%%%%
			
		We prove Lemma~\ref{lem:density_bound} (and so Theorem~\ref{theo:centralized_upperbound}) for the broad general case where the estimation noise random variables are mutually independent and they are drawn from continuous distributions with density. 
		We first enunciate some definitions and assumptions that are taken on the random variables and their probability density functions (pdf). 
		Later, we prove the lemma for the $K=2$ users case and, to conclude, we prove the general $K>2$ users case by induction.
		From the independence between different channel coefficients, we can restrict ourselves to an arbitrary link such that we omit the previously used sub-indexes $i,k$.
		\vspace{1ex}

		%%%%%%%%%%%%%%%%%%%%%%%%%%%%%%%%%%%%%%%%%%%%%%%%%%%%%%%%%%%%%%%%%%%%%%%%%%%%%%%%%
		%%%%%%%%%%%%%%%%%%%%%%%%%%%%%%%%%%%%%%%%%%%%%%%%%%%%%%%%%%%%%%%%%%%%%%%%%%%%%%%%%
		\subsection{Proof of Lemma~\ref{lem:density_bound}} \label{subse:proof_lem_cond_init}
		%%%%%%%%%%%%%%%%%%%%%%%%%%%%%%%%%%%%%%%%%%%%%%%%%%%%%%%%%%%%%%%%%%%%%%%%%%%%%%%%%
		%%%%%%%%%%%%%%%%%%%%%%%%%%%%%%%%%%%%%%%%%%%%%%%%%%%%%%%%%%%%%%%%%%%%%%%%%%%%%%%%%
					
			%%%%%%%%%%%%%%%%%%%%%%%%%%%%%%%%%%%%%%%%%%%%%%%%%%%%%%%%%%%%%%%%%%%%%%%%%%%%%%%%%
				\subsubsection{Preliminaries}~\vspace{1ex}
			%%%%%%%%%%%%%%%%%%%%%%%%%%%%%%%%%%%%%%%%%%%%%%%%%%%%%%%%%%%%%%%%%%%%%%%%%%%%%%%%%
			
				We recall that the probability density function of a random variable $\Xc$ is denoted as  $f_{\Xc}(x)$. 
				Next, we introduce three  definitions that are necessary for the proof. 
					\enb
						\item[D1)] \emph{$\varepsilon$-Support:} For any $\varepsilon > 0$, the $\varepsilon$-support of a random variable $\Xc$ is defined as
								\eqm{
									\Sc^{\varepsilon}_{\Xc} \triangleq \{x\mid  f_{\Xc}(x)>\varepsilon\}.
								} 
						\item[D2)] \emph{Bounded Support}: A random variable $\Xc$ is said to have bounded support if $\exists M_{\!\Xc} < \infty$ such that $x \leq \abs{M_{\!\Xc}}$ for any $x \in \Sc^{\varepsilon}_{\Xc}$ and for any $\varepsilon > 0$.
						\item[D3)] \emph{Bounded Probability Density Function}: A random variable $\Xc$ is said to have bounded probability density function if there exists a constant $f^{\max}_{\!\Xc}<\infty$ such that $f_{\!\Xc}(x) \leq f^{\max}_{\!\Xc}$  for any $x$. %\vspace{1ex}
					\ene						
				Furthermore, the following assumptions are adopted:%\vspace{1ex}
					\enb
							\item[H1)] $\Hc,\ \Delta^{(j)}$, $\forall j\in\Kc$, are  continuous random variables with density that satisfy D2) and D3). 
							\item[H2)] $P>1$ and $1\geq\alpha^{(1)}\geq\dots\geq\alpha^{(K)}\geq 0$.
							\item[H3)] $\Hc,\ \Delta^{(j)}$,  are  independent of $P$, $\alpha^{(j)}$.							
							\item[H4)] $\Hc_{i,k}$ is independent of $\Hc_{i',k'},\quad \forall (i,k)\neq (i',k')$.
							\item[H5)] $\Delta^{(j)}_{i,k}$ is independent of $\Delta^{(\ell)}_{i',k'},\quad \forall (i,k,j)\neq (i',k',\ell)$.%\vspace{1ex}
					\ene
					
				From the above, $\Hc$ and $\Delta^{(j)}$ satisfy Definition~\ref{def:bounded_density}, $\forall j\in\Kc$. 
				The $K$ different estimates of $\Hc$ are $K$ random variables defined as~$\hat{\Hc}^{(j)} \triangleq \Hc + \Pb^{-\alpha^{(j)}}\Delta^{(j)}$. 
				Further, we denote the samples drawn from the aforementioned variables	as $h\sim \Hc$,\quad $\delta^{(j)}\sim \Delta^{(j)}$,\quad $\hat{h}^{(j)}\sim \hat{\Hc}^{(j)}$, such that it follows that $\hat{h}^{(j)}\triangleq h+\Pb^{-\alpha^{(j)}}\delta^{(j)}$. As a refresher, and because we will make extensive use of it, we recall the well-known formula for the pdf of a random variable multiplied by a positive constant. %, along with its simple proof.
					
					\begin{proposition}\label{prop:product_constant} %\cite{Billingsley1995} [XXX Antonio: can you just check and precised a chapter/page in this book? thx XXXX]
							If $\Xc$ is a continuous random variable with probability density function $f_{\Xc}(x)$, then, for $c \!>\! 0$, so is $c\cdot\Xc$ a continuous random variable with probability density function % \cite{Cover2006} 
							\eqm{	f_{c\Xc}(x) =  \frac{1}{c}f_{\Xc}\LB\frac{x}{c}\RB. \label{eq:constant_pdf0}} 
					\end{proposition}
					
				Furthermore, we present a useful lemma  on the convergence of the estimate variables $\hat{\Hc}^{(j)}$ that we apply during the proof. 
				
					\begin{lemma}\label{lem:convergence_estimation}
						Let $\hat{\Hc}^{(j)}$, $\forall j\in\Kc$, be defined from assumptions H1)-H5), and consider that $\alpha^{(j)}>0$. Then, $f_{\hat{\Hc}^{(j)}}$ converges almost surely to $f_{\Hc}$, i.e.,
									\eqm{
											\lim_{P\rightarrow\infty} f_{\hat{\Hc}^{(j)}}(x) = f_{\Hc}(x).
									}			
					\end{lemma}	
					\begin{corollary}\label{cor:conditional_limit_k}
							Let $\alpha^{(1)}>0$. Then, 
									\eqm{
											\lim_{P\rightarrow\infty} f_{\Delta^{(1)}\mid\hat{\Hc}^{(1)},\dots,\hat{\Hc}^{(K)}}(y|\hat{h}^{(1)},\ \dots,\hat{h}^{(K)}) = f_{\Delta^{(1)}}(y).
									}			
					\end{corollary}							
											
					\begin{proof}
							The proof of both Lemma~\ref{lem:convergence_estimation} and Corollary~\ref{cor:conditional_limit_k} is relegated to  Appendix~\ref{subse:proof_lem_cond}. 
					\end{proof}
				Finally, we recall here the Lebesgue's Dominated Convergence Theorem\cite{Billingsley1995}. %[\cite{Billingsley1995}]
					\begin{theorem}[\!{\cite[Theorem~16.4]{Billingsley1995}}] \label{theo:dominated_convergence}
						 Let $\{f_n\}$ be a sequence of  functions on the measure space $(\Omega,\Sigma,\mu)$, where $\Omega$ is a non-empty sample space, $\Sigma$ is a $\sigma$-algebra on the space $\Omega$, and $\mu$ a measure on $\LB\Omega,\Sigma\RB$. Suppose that
								\eqm{
										\lim_{n\rightarrow\infty}  f_n(x)   = f(x) 
								}
						almost surely. Further suppose that exists an integrable non-negative function $G$ such that
							$\abs{f_n(x)}\leq G(x),\ \forall n,$
						almost surely. Then $\{f_n\}$ and $f$ are integrable and %$\abs{f(x)}\leq G(x)$ and 
							\eqm{
									\lim_{n\rightarrow\infty} \int_{\Omega} f_n(x)d\mu(x)   = \int_{\Omega} f (x)d\mu(x).
							}
					\end{theorem}
					
				%%%%%%%%%%%%%%%%%%%%%%%%%%%%%%%%%%%%%%%%%%%%%%%%%%%%%%%%%%%%%%%%%%%%%%%%%%%%%%%%%
					\subsubsection{Proof for the K=2 estimates Case}\label{subsubse:2_est}~\vspace{1ex}
				%%%%%%%%%%%%%%%%%%%%%%%%%%%%%%%%%%%%%%%%%%%%%%%%%%%%%%%%%%%%%%%%%%%%%%%%%%%%%%%%%
					We can write 
						\eqm{ 
							f_{\Hc| \hat{\Hc}^{(1)}, \hat{\Hc}^{(2)}}(h | \hat{h}^{(1)}, \hat{h}^{(2)})  
									&= \frac{f_{\Hc, \hat{\Hc}^{(1)},\hat{\Hc}^{(2)}}(h, \hat{h}^{(1)}, \hat{h}^{(2)})}{f_{\hat{\Hc}^{(1)},\hat{\Hc}^{(2)}}(\hat{h}^{(1)},\hat{h}^{(2)}) } \label{eq:cond_prob} \\
									&\hspace{00ex} \overset{(a)}{=} \frac{f_{\Hc,\hat{\Hc}^{(1)}}(h, \hat{h}^{(1)})f_{\Pb^{-\alpha^{(2)}}\Delta^{(2)}}(\hat{h}^{(2)}-h)}{f_{\hat{\Hc}^{(1)}}(\hat{h}^{(1)}) f_{\hat{\Hc}^{(2)} \mid  \hat{\Hc}^{(1)}}(\hat{h}^{(2)}\mid  \hat{h}^{(1)})} \label{eq:cond_prob4} \\
									&\hspace{0ex} = {f_{\Hc| \hat{\Hc}^{(1)}}(h|\hat{h}^{(1)})} \frac{f_{\Pb^{-\alpha^{(2)}}\Delta^{(2)}}(\hat{h}^{(2)}-h)}{f_{\hat{\Hc}^{(2)}\mid  \hat{\Hc}^{(1)}}(\hat{h}^{(2)}\mid  \hat{h}^{(1)})}, \label{eq:cond_prob5}
						}				
					where $(a)$ comes from the independence between $\Hc$, $\Delta^{(1)}$, $\Delta^{(2)}$. Furthermore, by applying Bayes' formula we obtain that  
						\begin{align}
								f_{\Hc|\hat{\Hc}^{(1)}}(h\mid \hat{h}^{(1)}) 
											& = \frac{f_{\Pb^{-\alpha^{(1)}}\Delta^{(1)}}( \Pb^{-\alpha^{(1)}}\delta^{(1)}) f_{\Hc}(h)}{f_{\hat{\Hc}^{(1)}}(\hat{h}^{(1)})} \\
											& = \Pb^{\alpha^{(1)}} \frac{ f_{\Delta^{(1)}}(\delta^{(1)}) f_{\Hc}(h)}{f_{\hat{\Hc}^{(1)}}(\hat{h}^{(1)})}, \label{eq:remark_pdf}
						\end{align}	
					where \eqref{eq:remark_pdf} comes from Proposition~\ref{prop:product_constant}. Let us consider separately the cases where $\alpha^{(1)} = 0$ and where $\alpha^{(1)}>0$.
					%%%%%%%%%%%%%%%%%%%%%%%%%%%%%%%%%%%%%%%%%%%
					\paragraph{$\alpha^{(1)} = 0$} 
					%%%%%%%%%%%%%%%%%%%%%%%%%%%%%%%%%%%%%%%%%%%	
						In this case, \eqref{eq:remark_pdf} does not depend on $P$, since $P^0 = 1,\ \forall P>0$. 
						From H1), $f_{\Hc}$ and $f_{\Delta^{(1)}}$ are bounded away from $\infty$. 
						Moreover, if $\hat{h}^{(1)}\in\Sc^\varepsilon_{\hat{\Hc}^{(1)}}$, then $ f_{\hat{\Hc}^{(1)}}$ is also lower-bounded by $\varepsilon$. Thus, 
							\begin{align}\label{eq:proof_2_first_final0}
									\max f_{\Hc|\hat{\Hc}^{(1)}}(h\mid \hat{h}^{(1)}) = O\LB{\Pb^{0}}\RB.
							\end{align}	
						
					%%%%%%%%%%%%%%%%%%%%%%%%%%%%%%%%%%%%%%%%%%%
					\paragraph{$\alpha^{(1)} > 0$} 
					%%%%%%%%%%%%%%%%%%%%%%%%%%%%%%%%%%%%%%%%%%%	
						From Lemma~\ref{lem:convergence_estimation}, we have that  $f_{\hat{\Hc}^{(1)}}$ converges almost surely (a.s.) to $f_{\Hc}$, and from H1) that $\max f_{\Delta^{(1)}}<\infty$. Thus, from \eqref{eq:remark_pdf} it holds that  
							\begin{align}\label{eq:proof_2_first_final1}
									\max f_{\Hc|\hat{\Hc}^{(1)}}(h\mid \hat{h}^{(1)}) = O\big({\Pb^{\alpha^{(1)}}}\big).
							\end{align}	
						Hence, it follows from~\eqref{eq:proof_2_first_final0}-\eqref{eq:proof_2_first_final1} that, in order to prove Lemma~\ref{lem:density_bound}, i.e., that 
							\begin{align}
									\max f_{\Hc|\hat{\Hc}^{(1)},\hat{\Hc}^{(2)}} = O\big({\Pb^{\alpha^{(1)}}}\big), \label{eq:proof_2_prove}
							\end{align}						
						we need to demonstrate that the limit
							\eqm{
									\lim_{P\rightarrow\infty} \frac{f_{\Hc\mid \hat{\Hc}^{(1)},\hat{\Hc}^{(2)}}(h | \hat{h}^{(1)}, \hat{h}^{(2)})}{f_{\Hc\mid \hat{\Hc}^{(1)}}(h | \hat{h}^{(1)})} \label{eq:cond_prob700}
							}								
					exists and is bounded away from 0 and $\infty$. %This follows. 		
					First, note that from~\eqref{eq:cond_prob5} we have that 
						\eqm{
								&\frac{f_{\Hc\mid \hat{\Hc}^{(1)},\hat{\Hc}^{(2)}}(h | \hat{h}^{(1)}, \hat{h}^{(2)})}{f_{\Hc\mid \hat{\Hc}^{(1)}}(h | \hat{h}^{(1)})} %\nonumber \\
										= \frac{f_{\Pb^{-\alpha^{(2)}}\Delta^{(2)}}(\hat{h}^{(2)}-h)}{f_{\hat{\Hc}^{(2)}\mid  \hat{\Hc}^{(1)}}(\hat{h}^{(2)}\mid  \hat{h}^{(1)})}. \label{eq:cond_prob7}
						}
					Let us focus first on the denominator of the right-hand side of \eqref{eq:cond_prob7}.					
					By taking into account again that $\Delta^{(j)}$ is independent of $\Hc$, and that $\hat{\Hc}^{(2)} = \hat{\Hc}^{(1)}\! - \!\Pb^{-\alpha^{(1)}}\!\!\Delta^{(1)}\! +\! \Pb^{-\alpha^{(2)}}\!\!\Delta^{(2)}$,  %$\hat{\Hc}^{(j)}=\Hc+\Pb^{-\alpha^{(j)}}\Delta^{(j)}$, 
					we obtain that 
						\eqm{ 
								f_{\hat{\Hc}^{(2)}\mid  \hat{\Hc}^{(1)}}(\hat{h}^{(2)}\mid  \hat{h}^{(1)})
									& = f_{\hat{\Hc}^{(1)} - \Pb^{-\alpha^{(1)}}\Delta^{(1)} + \Pb^{-\alpha^{(2)}}\Delta^{(2)}\mid  \hat{\Hc}^{(1)}}(\hat{h}^{(1)} +\hat{h}^{(2)}-\hat{h}^{(1)}\mid  \hat{h}^{(1)}) \\				
									& = f_{\Pb^{-\alpha^{(2)}}\Delta^{(2)} - \Pb^{-\alpha^{(1)}}\Delta^{(1)}  \mid  \hat{\Hc}^{(1)}}(\hat{h}^{(2)}-\hat{h}^{(1)} \mid  \hat{h}^{(1)}). \label{eq:eq_var_e}
						}
					Note that $\hat{h}^{(2)}-\hat{h}^{(1)} = \Pb^{-\alpha^{(2)}}\delta^{(2)} - \Pb^{-\alpha^{(1)}}\delta^{(1)}$. 
					From the independence of $\Delta^{(1)}$ and $\Delta^{(2)}$, \eqref{eq:eq_var_e} can be expressed as a convolution such that  
						\eqm{
							f_{\hat{\Hc}^{(2)}\mid  \hat{\Hc}^{(1)}}(\hat{h}^{(2)}\mid  \hat{h}^{(1)}) 
									& = f_{-\Pb^{-\alpha^{(1)}}\Delta^{(1)}\mid\hat{\Hc}^{(1)}}\!\ast\! f_{\Pb^{-\alpha^{(2)}}\Delta^{(2)}}(\hat{h}^{(2)}\!-\!\hat{h}^{(1)}|\hat{h}^{(1)}) \label{eq:cond_prob7a} \\
									& = \int^{\infty}_{-\infty} %\!\!\!\!\!\!\!\!\!\!\!\!\!\!\!\!\!\!\!\!\!\!\!\! 
										f_{\Pb^{-\alpha^{(2)}}\Delta^{(2)}}(\hat{h}^{(2)}\!-\!\hat{h}^{(1)}\!-\! x)   f_{-\Pb^{-\alpha^{(1)}}\Delta^{(1)}\mid\hat{\Hc}^{(1)}}(x|\hat{h}^{(1)})\dd x. \label{eq:boundaries}
						}
					We start by applying the change of pdf of Proposition~\ref{prop:product_constant} from $f_{-\Pb^{-\alpha^{(1)}}\Delta^{(1)}}$ to $f_{-\Delta^{(1)}}$  % ~\eqref{eq:constant_pdf0} 
					such that 
						\eqm{
							f_{\hat{\Hc}^{(2)}\mid  \hat{\Hc}^{(1)}}(\hat{h}^{(2)}\mid  \hat{h}^{(1)}) 
									&\hspace{0ex} = \int^{\infty}_{-\infty} \!\Pb^{\alpha^{(2)}}\! f_{\Delta^{(2)}}\big(\Pb^{\alpha^{(2)}}(\hat{h}^{(2)}-\hat{h}^{(1)}-x)\big)  \Pb^{\alpha^{(1)}}f_{-\Delta^{(1)}\mid\hat{\Hc}^{(1)}}(\Pb^{\alpha^{(1)}}\!x \mid \hat{h}^{(1)})\dd x.
						}
					Changing the integration variable to $y=\Pb^{\alpha^{(1)}}x$ (and thus $\dd x=\Pb^{-\alpha^{(1)}}\dd y$) yields
						\eqm{
							f_{\hat{\Hc}^{(2)}\mid  \hat{\Hc}^{(1)}}(\hat{h}^{(2)}\mid  \hat{h}^{(1)}) 
									& = \Pb^{\alpha^{(2)}}\!\!\int^{\infty}_{-\infty}\!\! f_{\Delta^{(2)}}\big(\Pb^{\alpha^{(2)}}(\hat{h}^{(2)}-\hat{h}^{(1)}-\Pb^{-\alpha^{(1)}}y)\big)  f_{-\Delta^{(1)}\mid\hat{\Hc}^{(1)}}(y|\hat{h}^{(1)})\dd y  \\
									& = \Pb^{\alpha^{(2)}}\int^{\infty}_{-\infty}\upsilon_P(y)\dd y, \label{eq:last_together}
						}
					where 
						\eqm{
									& \upsilon_P(y) \triangleq f_{\Delta^{(2)}}\big(\delta^{(2)}-\Pb^{\alpha^{(2)}-\alpha^{(1)}}(\delta^{(1)}+y)\big) f_{-\Delta^{(1)}\mid\hat{\Hc}^{(1)}}(y|\hat{h}^{(1)}) 	\label{eq:gp_y}							
						}		
					comes from applying  $\hat{h}^{(i)} = h+\Pb^{-\alpha^{(i)}}\delta^{(i)}$. % and letting $\upsilon_P(y)$ be		
					From \eqref{eq:last_together} and by applying again the change of pdf of Proposition~\ref{prop:product_constant} to the numerator, the term $\frac{f_{\Pb^{-\alpha^{(2)}}\Delta^{(2)}}(\hat{h}^{(2)}-h)}{f_{\hat{\Hc}^{(2)}\mid  \hat{\Hc}^{(1)}}(\hat{h}^{(2)}\mid  \hat{h}^{(1)})}$ can be expressed as 
							\eqm{
								\frac{f_{\Pb^{-\alpha^{(2)}}\Delta^{(2)}}(\hat{h}^{(2)}-h)}{f_{\hat{\Hc}^{(2)}\mid  \hat{\Hc}^{(1)}}(\hat{h}^{(2)}\mid  \hat{h}^{(1)})} 
										&= \frac{f_{\Delta^{(2)}}(\delta^{(2)})}{\int^{\infty}_{-\infty} \upsilon_P(y) \dd y}. \label{eq:simple_end_0b}
							}						
					From continuity of $f_{\Delta^{(1)}}$ and $f_{\Delta^{(2)}}$, and Corollary~\ref{cor:conditional_limit_k}, we obtain the limit of $\upsilon_P(y)$ in \eqref{eq:gp_y} when $P\rightarrow\infty$. This limit has two possible expressions depending on the relation between $\alpha^{(1)}$ and $\alpha^{(2)}$. Specifically,  it holds that
							\eqm{
									\lim_{P\rightarrow\infty}  \upsilon_P(y)  = f_{\Delta^{(2)}} (\delta^{(2)}\!-\!\delta^{(1)}-y)  f_{-\Delta^{(1)}}(y)  \label{eq:cases_lim1}
							}										
					if $ \alpha^{(1)} = \alpha^{(2)}$, and that
							\eqm{
									\lim_{P\rightarrow\infty} \upsilon_P(y)  =  f_{\Delta^{(2)}} (\delta^{(2)})  f_{-\Delta^{(1)}}(y)  \label{eq:cases_lim2}
							}	
					if $\alpha^{(1)} > \alpha^{(2)}$. 
					Now we prove separately each of the two possible cases. %$a)$ $\alpha^{(1)}=\alpha^{(2)}$ and $b)$ $\alpha^{(1)}>\alpha^{(2)}$.\vspace{1ex}
					\setcounter{paragraph}{0}
					%%%%%%%%%%%%%%%%%%%%%%%%%%%%%
						\paragraph{$\alpha^{(1)}=\alpha^{(2)}$}%~\vspace{1ex}
					%%%%%%%%%%%%%%%%%%%%%%%%%%%%%		
						From the Lebesgue's Dominated Convergence Theorem (Theorem~\ref{theo:dominated_convergence}), D3), and~\eqref{eq:cases_lim1}, the limit exists and it holds that 
							\eqm{
									&\lim_{P\rightarrow\infty} \int^{\infty}_{-\infty} \upsilon_P(y) \dd y  =f_{-\Delta^{(1)}}\ast f_{\Delta^{(2)}}(\delta^{(2)}-\delta^{(1)}). \label{eq:lim_gy_p_a}
							}	
						From~\eqref{eq:simple_end_0b} and \eqref{eq:lim_gy_p_a}, we obtain that 
							\eqm{
									\lim_{P\rightarrow\infty}\frac{f_{\Pb^{-\alpha^{(2)}}\Delta^{(2)}}(\hat{h}^{(2)}- h)}{f_{\hat{\Hc}^{(2)}\mid  \hat{\Hc}^{(1)}}(\hat{h}^{(2)}\mid \hat{h}^{(1)})} 
											&= \frac{f_{\Delta^{(2)}}(\delta^{(2)})}{ f_{-\Delta^{(1)}}\ast f_{\Delta^{(2)}}(\delta^{(2)}-\delta^{(1)})}. \label{eq:simple_end_0z}
							}							
						From D3), we have that, for any $i\in\{1,2\}$, $\exists f^{\max}_{\Delta^{(i)}}<\infty$ such that $f_{\Delta^{(i)}}(x) \leq f^{\max}_{\Delta^{(i)}} \ \ \forall x$. Then, it holds that
							\eqm{
									f_{-\Delta^{(1)}}\ast f_{\Delta^{(2)}}(x)\ \leq\ \max(f^{\max}_{\Delta^{(1)}},f^{\max}_{\Delta^{(2)}}). \label{eq:max_conv}
							}
						Conversely,	let $\mathds{1}$ be the indicator function and let then $\tau$ be
							\eqm{
									\tau \triangleq \int_{-\infty}^{\infty}  \mathds{1}_{x\in\Sc^\varepsilon_{-\Delta^{(1)}}}
											\times \mathds{1}_{(\delta^{(2)}-\delta^{(1)}-x)\in\Sc^\varepsilon_{\Delta^{(2)}}}\dd x.
							}	
						Then, 
							\eqm{
									f_{-\Delta^{(1)}}\ast f_{\Delta^{(2)}}(\delta^{(2)}-\delta^{(1)})\ >\ \epsilon^2\tau  		 \label{eq:min_conv}
							}						
						and $\tau>0$ if	$\delta^{(1)}\in\Sc^\varepsilon_{\Delta^{(1)}}$	and $\delta^{(2)}\in\Sc^\varepsilon_{\Delta^{(2)}}$. 						
						From \eqref{eq:max_conv} and \eqref{eq:min_conv}, \eqref{eq:simple_end_0z} satisfies
							\eqm{
									\frac{\varepsilon}{\max(f^{\max}_{\Delta^{(1)}},f^{\max}_{\Delta^{(2)}})} \	&<\ \frac{f_{\Delta^{(2)}}(\delta^{(2)})}{ f_{-\Delta^{(1)}}\ast f_{\Delta^{(2)}}(\delta^{(2)}-\delta^{(1)})} 
												\ <\ \frac{f^{\max}_{\Delta^{(2)}}}{ \varepsilon^2\tau}. \nonumber
												%&\ <\ \infty.
							}	
						This implies \eqref{eq:proof_2_prove} and thus the proof is concluded for the $\alpha^{(2)}=\alpha^{(1)}$ case. \vspace{1ex} % what concludes the proof for the $\alpha^{(2)}=\alpha^{(1)}$ case.
							
					%%%%%%%%%%%%%%%%%%%%%%%%%%%%%
						\paragraph{$\alpha^{(1)}>\alpha^{(2)}$}%~\vspace{1ex}
					%%%%%%%%%%%%%%%%%%%%%%%%%%%%%		
						From the Lebesgue's Dominated Convergence Theorem  and~\eqref{eq:cases_lim2}, the limit exists and it holds that
							\eqm{
									\lim_{P\rightarrow\infty}   \upsilon_P(y)	
												& =  \int^{\infty}_{-\infty} f_{\Delta^{(2)}}(\delta^{(2)}) f_{-\Delta^{(1)}}(y)\dd y \label{eq:dominated_applied1} \\
												& =  f_{\Delta^{(2)}}(\delta^{(2)}). \label{eq:dominated_applied2}
							}
						Applying~\eqref{eq:dominated_applied2} in~\eqref{eq:simple_end_0b} yields
							\eqm{
									&\lim_{P\rightarrow\infty}\frac{f_{\Pb^{-\alpha^{(2)}}\Delta^{(2)}}(\hat{h}^{(2)}-h)}{f_{\hat{\Hc}^{(2)}\mid  \hat{\Hc}^{(1)}}(\hat{h}^{(2)}\mid  \hat{h}^{(1)})} = 1,
							}						
							which concludes the proof of Lemma~\ref{lem:density_bound} for the 2-estimate case. \vspace{1ex}								
								
				%%%%%%%%%%%%%%%%%%%%%%%%%%%%%%%%%%%%%%%%%%%%%%%%%%%%%%%%%%%%%%%%%%%%%%%%%%%%%%%%%
					\subsubsection{Proof for $K>2$ estimates}\label{subsubse:K_est}~\vspace{1ex}
				%%%%%%%%%%%%%%%%%%%%%%%%%%%%%%%%%%%%%%%%%%%%%%%%%%%%%%%%%%%%%%%%%%%%%%%%%%%%%%%%%
					
					We prove by induction that Lemma~\ref{lem:density_bound} also holds for any number $K$ of estimates. 
					We have proved that it is true for the base cases $K=1$ (trivial) and $K=2$. 
					In the following, we prove the induction step. 
					We denote the set of estimates as $\Gck\triangleq\{\hat{\Hc}^{(1)},\dots,\hat{\Hc}^{(K)}\}$ and, consistently, the set of given values as $\gck\triangleq\{\hat{h}^{(1)},\dots,\hat{h}^{(K)}\}$.
					Let us  assume that Lemma~\ref{lem:density_bound} is verified for a given $K$. 
					We consider ${K+1}$~estimates. Then, from the mutual independence of the estimation noise variables $\Delta^{(j)}$ and Bayes' formula, we obtain that 
						\eqm{ 
								f_{\Hc\mid \Gck,\hat{\Hc}^{(K+1)}}\big(h\mid \gck,\hat{h}^{(K+1)}\big)  = \underbrace{\frac{f_{\Hc, \Gck}\big(h,\gck\big)}{f_{\Gck}\LB\gck\RB}}_{f_{\hat{\Hc}\mid \Gck}(\hat{h}\mid \gck)} \frac{f_{\Pb^{-\alpha^{(K+1)}}\Delta^{(K+1)}}\big(\hat{h}^{(K+1)}-h\big)}{f_{\hat{\Hc}^{(K+1)}|\Gck}\big(\hat{h}^{(K+1)}|\gck\big)}. \label{eq:cond_probaa5}
						} 
					From the induction hypothesis, it holds that
						\begin{align}
								\max f_{\Hc\mid \Gck}(h\mid \gck) = O\big({\Pb^{\alpha^{(1)}}}\big).
						\end{align}						
					Thus, we need to prove that
						\eqm{
								0 < \lim_{P\rightarrow\infty} \frac{f_{\Pb^{-\alpha^{(K+1)}}\Delta^{(K+1)}}\big(\hat{h}^{(K+1)}-h\big)}{f_{\hat{\Hc}^{(K+1)}|\Gck}(\hat{h}^{(K+1)}|\gck)} < \infty. \label{eq:k_users_proof_a} % =\phi'(h, \gck,\hat{h}^{(K+1)})
						}
					By taking into consideration that 
						\eqm{
								\hat{h}^{(K+1)}-\hat{h}^{(1)} = \Pb^{-\alpha^{(K+1)}}\delta^{(K+1)} - \Pb^{-\alpha^{(1)}}\delta^{(1)},
						}
						the denominator of the expression in \eqref{eq:k_users_proof_a} can be rewritten as  
						\eqm{ 
								f_{\hat{\Hc}^{(K+1)}|\Gck}\big(\hat{h}^{(K+1)}|\gck\big) 
										&  = f_{\hat{\Hc}^{(1)} + \Delta' \mid \Gck}\big(\hat{h}^{(K+1)}-\hat{h}^{(1)}+\hat{h}^{(1)}\mid  \gck\big) \\
										&  = f_{\Delta' \mid  \Gck}\big(\Pb^{-\alpha^{(K+1)}}\delta^{(K+1)} - \Pb^{-\alpha^{(1)}}\delta^{(1)}  \mid  \gck\big), \label{eq:k_users_proof_l}
						}
					where $\Delta' = \Pb^{-\alpha^{(K+1)}}\Delta^{(K+1)} - \Pb^{-\alpha^{(1)}}\Delta^{(1)}$. 
					Hence, we can express $f_{\Delta' \mid  \Gck}$ as a convolution of pdfs and obtain  
						\eqm{ 
							f_{\hat{\Hc}^{(K+1)}|\Gck}\big(\hat{h}^{(K+1)}|\gck\big)  
								&= f_{\Pb^{-\alpha^{(K+1)}}\Delta^{(K+1)}}\ast f_{-\Pb^{-\alpha^{(1)}}\Delta^{(1)}\mid\Gck}\big(\hat{h}^{(K+1)}- \hat{h}^{(1)}|\gck\big).
						}
					We follow the same steps as in \eqref{eq:boundaries}-\eqref{eq:simple_end_0b} to obtain 
						\eqm{
							\frac{f_{\Pb^{-\alpha^{(K+1)}}\Delta^{(K+1)}}\big(\hat{h}^{(K+1)}-h \big)}{f_{\hat{\Hc}^{(K+1)}|\Gck}\big(\hat{h}^{(K+1)}|\gck\big) } 
										& = \frac{f_{\Delta^{(K+1)}}\big(\delta^{(K+1)}\big)}{ \int^{\infty}_{-\infty} f_{\Delta^{(K+1)}}\big(\delta'_y\big) f_{-\Delta^{(1)}\mid\Gck}\big(y|\gck\big)\dd y}, \label{eq:k_end_0b}
						}	
					where we have introduced the notation 
							\eqm{
									\delta'_y \triangleq \delta^{(K+1)}-\Pb^{\alpha^{(K+1)}-\alpha^{(1)}}(\delta^{(1)}+y)
							} 
					for ease of reading. 
					We can see that \eqref{eq:k_end_0b} is equivalent to \eqref{eq:simple_end_0b} with $\Delta^{(K+1)}$ in place of $\Delta^{(2)}$. Then, by following the same derivation as in the 2-estimate case, i.e., using Corollary~\ref{cor:conditional_limit_k} and Lebesgue's Dominated Convergence Theorem, we conclude the induction step. 
					From the base case and the induction step, Lemma~\ref{lem:density_bound} is proven. 
		
			%%%%%%%%%%%%%%%%%%%%%%%%%%%%%%%%%%%%%%%%%%%%%%%%%%%%%%%%%%%%%%%%%%%%%%%%%%%%%%%%%
				\subsection{Proof of Lemma~\ref{lem:convergence_estimation}} \label{subse:proof_lem_cond}
			%%%%%%%%%%%%%%%%%%%%%%%%%%%%%%%%%%%%%%%%%%%%%%%%%%%%%%%%%%%%%%%%%%%%%%%%%%%%%%%%%
				Assuming $\alpha^{(i)}>0$, we have that 
					\eqm{
							\lim_{P\rightarrow\infty} f_{\hat{\Hc}^{(i)}}(\hat{h}^{(i)}) 
										&\ = \lim_{P\rightarrow\infty} f_{\Hc +\Pb^{-\alpha^{(i)}}\Delta^{(i)}}(h+\Pb^{-\alpha^{(i)}}\delta^{(i)})\\
										&\ = \lim_{P\rightarrow\infty} \int_{-\infty}^{\infty} f_{\Hc}(h+\Pb^{-\alpha^{(i)}}\delta^{(i)} -x) f_{\Pb^{-\alpha^{(i)}}\Delta^{(i)}}(x)\dd x\\
										&\ = \lim_{P\rightarrow\infty} \int_{-\infty}^{\infty} f_{\Hc}(h+\Pb^{-\alpha^{(i)}}\delta^{(i)} -\Pb^{-\alpha^{(i)}} y) f_{\Delta^{(i)}}(y)\dd y		\label{eq:lem_proof_changevar_int}  \\
										&\ = \int_{-\infty}^{\infty}f_{\Hc}(h) f_{\Delta^{(i)}}(y)\dd y		\label{eq:lem_proof_dominated}	\\
										&\ = f_{\Hc}(h), \label{eq:lem_proof_dominated_last}
					}													
				where~\eqref{eq:lem_proof_changevar_int} comes from applying the relation between $f_{\Pb^{-\alpha_i}\Delta_i}$ and $f_{\Delta_i}$ (see Proposition~\ref{prop:product_constant}), and from the change of integration variable $y=\Pb^{\alpha_i}x$. Finally,~\eqref{eq:lem_proof_dominated} follows from applying Lebesgue's Dominated Convergence Theorem. %(Theorem~\ref{theo:dominated_convergence}). 
				Therefore, $f_{\hat{\Hc}^{(i)}}$ converges almost surely to~$f_{{\Hc}}$ and hence Lemma~\ref{lem:convergence_estimation} is proven. 
				Then, in order to prove Corollary~\ref{cor:conditional_limit_k}, i.e., that 
					\eqm{
							\lim_{P\rightarrow\infty} f_{\Delta^{(1)}\mid\hat{\Hc}^{(1)},\dots,\hat{\Hc}^{(K)}}(y|\hat{h}^{(1)},\dots,\hat{h}^{(K)}) = f_{\Delta^{(1)}}(y),
					}
				we apply Bayes' formula to obtain
					\eqm{
							f_{\Delta^{(1)}\mid\hat{\Hc}^{(1)},\dots,\hat{\Hc}^{(K)}}(y|\hat{h}^{(1)},\dots,\hat{h}^{(K)}) 
									&  = \frac{f_{\hat{\Hc}^{(1)},\dots,\hat{\Hc}^{(K)}|\Delta^{(1)}}(\hat{h}^{(1)},\dots,\hat{h}^{(K)}|y)f_{\Delta^{(1)}}(y)}{f_{\hat{\Hc}^{(1)},\dots,\hat{\Hc}^{(K)}}(\hat{h}^{(1)},\dots,\hat{h}^{(K)})} \\
									&  = \frac{f_{{\Hc},\hat{\Hc}^{(2)},\dots,\hat{\Hc}^{(K)}}(h,h^{(2)},\dots,h^{(K)})}{f_{\hat{\Hc}^{(1)},\dots,\hat{\Hc}^{(K)}}(h^{(1)},\dots,h^{(K)})}f_{\Delta^{(1)}}(y) \label{eq:lim_is_unity_uu}. 
					}
				The fact that $\alpha^{(1)}>0$ and~\eqref{eq:lem_proof_dominated_last} yield %we obtain that 
					\eqm{
							\lim_{P\rightarrow\infty} \frac{f_{{\Hc},\hat{\Hc}^{(2)},\dots,\hat{\Hc}^{(K)}}(h,h^{(2)},\dots,h^{(K)})}{f_{\hat{\Hc}^{(1)},\dots,\hat{\Hc}^{(K)}}(h^{(1)},\dots,h^{(K)})} = 1. \label{eq:lem_proof_dominated_last_0}
					}
				By taking the limit on \eqref{eq:lim_is_unity_uu} and applying \eqref{eq:lem_proof_dominated_last_0}, it holds  that 
					\eqm{
							\lim_{P\rightarrow\infty} f_{\Delta^{(1)}\mid\hat{\Hc}^{(1)},\dots,\hat{\Hc}^{(K)}}(y|h_1,\dots,h_K) %\\
								&= f_{\Delta^{(1)}}(y),\label{eq:lim_is_unity2}
					}								
				which concludes the proof.  \qed	
	
		%%%%%%%%%%%%%%%%%%%%%%%%%%%%%%%%%%%%%%%%%%%%%%%%%%%%%%%%%%%%%%%%%%%%%%%%%%%%%%%%%
		%%%%%%%%%%%%%%%%%%%%%%%%%%%%%%%%%%%%%%%%%%%%%%%%%%%%%%%%%%%%%%%%%%%%%%%%%%%%%%%%%
		%%%%%%%%%%%%%%%%%%%%%%%%%%%%%%%%%%%%%%%%%%%%%%%%%%%%%%%%%%%%%%%%%%%%%%%%%%%%%%%%%
		%%%%%%%%%%%%%%%%%%%%%%%%%%%%%%%%%%%%%%%%%%%%%%%%%%%%%%%%%%%%%%%%%%%%%%%%%%%%%%%%%
			\section{Proof of Corollary~\ref{theorem_two_phases}} \label{app:two_phases}
		%%%%%%%%%%%%%%%%%%%%%%%%%%%%%%%%%%%%%%%%%%%%%%%%%%%%%%%%%%%%%%%%%%%%%%%%%%%%%%%%%
		%%%%%%%%%%%%%%%%%%%%%%%%%%%%%%%%%%%%%%%%%%%%%%%%%%%%%%%%%%%%%%%%%%%%%%%%%%%%%%%%%
		%%%%%%%%%%%%%%%%%%%%%%%%%%%%%%%%%%%%%%%%%%%%%%%%%%%%%%%%%%%%%%%%%%%%%%%%%%%%%%%%%
		%%%%%%%%%%%%%%%%%%%%%%%%%%%%%%%%%%%%%%%%%%%%%%%%%%%%%%%%%%%%%%%%%%%%%%%%%%%%%%%%%
			
			In this appendix we prove Corollary~\ref{theorem_two_phases}, i.e., that the solution of Theorem~\ref{theo:theorem_achievable} is composed of at most two phases (Transmission Modes). This is equivalent to prove that there exist 
					$ k_1,k_2 \in \Kc,\ n_1 < k_1, \ n_2 < k_2$  %\label{eq:app_set}
			such that
					\begin{align}
							\gamma_{n_1,k_1} & > 0,\nonumber\\
							\gamma_{n_2,k_2} &\geq 0, \label{eq:optimal_solution}\\
							\gamma_{n,k} &= 0, \qquad \forall (n,k)\neq (n_1,k_1), (n_2,k_2), \nonumber
					\end{align}
			is always an optimal solution of the maximization problem stated in Theorem~\ref{theo:theorem_achievable}. 
			
			Let us denote the number of variables $\gamma_{n,k}$ in the optimization problem as~$D$. From the problem definition, we know that $D = \sum_{i=2}^{K}(i-1) = \frac{K(K-1)}{2}$. 
			In order to simplify the notation, we denote the aforementioned variables with a unique sub-index. 
			Therefore, the variables $\{\gamma_{n,k}\mid k,n\in\Kc,n<k\}$ become $\{\gammabar_i \mid i\in\{1, \dots, D\}\}$. 
			For a given bijective function $f$, the index $i$ is defined as $i\triangleq f(n,k)$.  
			We can choose e.g. $f(n,k) = \frac{(k-1)(k-2)}{2} + n$.  
			
			We recall the optimization problem of Theorem~\ref{theo:theorem_achievable} but, for sake of clarity, we present it in vector notation. 
			For that, let the vector containing the time-sharing variables $\gammabar_i$ be denoted by $\bgamma$, i.e., $\bgamma \triangleq [\gammabar_1,\gammabar_2,\dots,\gammabar_D]$. 
			Similarly, we define the vector $\bF^\alpha$ as the concatenation of the effective DoF of each Transmission Mode (see~\eqref{eq:Objective_primo}), such that  $\bF^\alpha \triangleq [\bF^\alpha_1,\dotsc,\bF^\alpha_D]$, and $\bF^\alpha_i = 1 + (k-1)\alpha^{(n)}$, with $k$, $n$ given by $(n,k) = f^{-1}(i)$. Finally, the vector of terms $d_{n,k} \triangleq  1 - \alpha^{(n)} - k(k-n-1)\alpha^{(n)}$ for the constraint~\eqref{eq:interference_retransmission_condition} %\LB 1 - \alpha^{(n)} - k(k-n-1)\alpha^{(n)}\RB
			is denoted as $\bd$. Hence, the optimization problem of Theorem~\ref{theo:theorem_achievable} can be expressed as			
						\begin{align} 
								\DoF^{\APZF}(\bm{\alpha})=\ \underset{\bgamma}{\mathrm{maximize}} &\quad \bF^\alpha\bgamma\\ % \qquad\qquad  %\label{eq:Objective_rep} 
								\mathrm{subjectto} &\quad \norm{\bgamma}_1=1\label{Time_sharing1_rep}\\
										&\quad\ \bgamma \succeq 0 \label{Time_sharing2_rep}\\
										&\quad\ \bd\bgamma \geq 0,\label{eq:interference_retransmission_condition_rep}
						\end{align} 
			where $\bF^\alpha$ and $\bd$ are constant vectors.
			Let us remind that if a linear programming problem has an optimal solution then it is an extreme point of the feasible set \cite{Luenberger2015}.
			
			The feasible set given by conditions \eqref{Time_sharing1_rep}-\eqref{Time_sharing2_rep}, which is denoted by $\Cc$, is the \emph{probability simplex}\cite{Boyd2004} determined by the unit vectors $e_{1}, \dots, e_{D} \in \Rb^{D}$, and consequently it is a ($D-1$)-dimensional simplex. 
			On the other hand,~\eqref{eq:interference_retransmission_condition_rep} represents a half-space determined by the {vector} hyperplane\cite{Boyd2004} given by $\bd\bgamma = 0$. The vector hyperplane $\bd\bgamma = 0$ is denoted as $\Vc$. 
			We can have different cases depending on how the \emph{probability simplex} $\Cc$ and the half-space determined by the hyperplane  $\Vc$ intersect. Namely:
			\begin{enumerate}
				\item If $\Cc\cap \{\bgamma\mid \bd\bgamma\geq 0\} = \Cc$ (i.e., $\Cc$ is a subset of the half-space), the feasible region is $\Cc$ and the extreme points are the unit vectors $e_{i}$. 
				Therefore, the solution of the problem  uses only a single mode because the only non-zero variable in a unit vector $e_{i}$ is the $i$-th variable.
				\item If  $\Cc\cap \{\bgamma\mid \bd\bgamma\geq 0\} = \emptyset$, there is not feasible solution. However, this is not possible since we have shown that this linear program is always feasible, just choosing $\gamma_{k-1,k}=1$, with $k\in\{2,\dots, K\}$.
				\item If $\Cc\cap \{\bgamma\mid \bd\bgamma\geq 0\} \subset \Cc$, we need to prove that all the extreme points of the resulting set satisfy \eqref{eq:optimal_solution}. 
				Those extreme points will either be the extreme points of $\Cc$ or belong to the intersection between $\Cc$ and~$\Vc$.		
			\end{enumerate}
			From linear algebra, we know that the intersection of an $l$-dimensional and an $m$-dimensional sub-space in the $n$-dimensional space $\Rb^{n}$ has dimension $p_i$ such that 
				\begin{align}
							p_i\geq l+m-n.
				\end{align}
				Thus, in order to obtain the extreme points ($p_i = 0$) of the feasible set, we must obtain the intersection in the space $\Rb^{D}$ between $\Vc$ ($m = D-1$) and the \emph{edges} of $\Cc$, i.e., the $1$-faces (segments) that define $\Cc$. 
			The edges of $\Cc$ are segments that connect two points with a single non-zero variable (the unit vectors), and therefore they belong to a line of only two non-zero variables. 
			Given that the intersection of $\Vc$ with one edge must be a point of the edge, it holds that all the extreme points have at most two non-zero variables, what means that they satisfy \eqref{eq:optimal_solution}. 
			Therefore, Corollary~\ref{theorem_two_phases} is proven. From the previous analysis, it follows that the feasibility set is convex. 
			
			Moreover, as Theorem~\ref{theo:theorem_achievable} is always composed of at most two Transmission Modes, it can be expressed as the following integer linear program:
				\begin{align} 
						\DoF^{\APZF}(\bm{\alpha})=\underset{\substack{k_1,n_1,\\k_2,n_2}}{\mathrm{maximize}} &\; 1+ \rho(k_1-1)\alpha^{(n_1)} + (1-\rho) (k_2-1)\alpha^{(n_2)}  \label{eq:Objective}\\
						\text{subject to } &\; k_1,k_2 \in \{2,\dots,K\},\\
															 &\; n_1\in \{1,\dots,k_1-1\},\\
															 &\; n_2\in \{1,\dots,k_2-1\} \mid d_{n_2,k_2} \geq 0,  \label{eq:time_sharing_integer}
				\end{align} 
			where   $\rho$ is given by  $\rho \triangleq 1$ if $ d_{n_1,k_1} \geq 0 $ and $\rho\triangleq \frac{d_{n_2,k_2}}{d_{n_2,k_2}-d_{n_1,k_1}}$ otherwise.

	%\end{appendices}

%%%%%%%%%%%%%%%%%%%%
%%%%%%%%%%%%%%%%%%%%%%%%%%%%%%%%%%%%%%%%
%%%%%%%%%%%%%%%%%%%%%%%%%%%%%%%%%%%%%%%%%%%%%%%%%%%%%%%%%%%%
%%%%%%%%%%%%%%%%%%%%%%%%%%%%%%%%%%%%%%%%%%%%%%%%%%%%%%%%%%%%%%%%%%%%%%%%%
%%%%%%%%%%%%%%%%%%%%%%%%%%%%%%%%%%%%%%%%%%%%%%%%%%%%%%%%%%%%%%%%%%%%%%%%%
%%%%%%%%%%%%%%%%%%%%%%%%%%%%%%%%%%%%%%%%%%%%%%%%%%%%%%%%%%%%%%%%%%%%%%%%%
	\bibliographystyle{IEEEtran}
	\bibliography{IEEEabrv,Literature} 

\end{document}